\numberwithin{equation}{section}
\journalname{Quantum Machine Intelligence}
\algnewcommand\algorithmicforeach{\textbf{for each}}
\newcommand\independent{\protect\mathpalette{\protect\independenT}{\perp}}
\def\independenT#1#2{\mathrel{\rlap{$#1#2$}\mkern2mu{#1#2}}}
\newcommand\notindependent{\not\!\independent}
\begin{document}
\sloppy

\title{Quantum-enhanced causal discovery for a small number of samples}

\author{
Yu Terada$\mbox{}^{\dagger}$ \and
Ken Arai$\mbox{}^{\dagger}$ \and
Yu Tanaka$\mbox{}^{\dagger}$ \and
Yota Maeda$\mbox{}^{\dagger}$ \and
Hiroshi Ueno$\mbox{}^{}$ \and
Hiroyuki Tezuka$\mbox{}^{\star}$
}

\institute{
       Advanced Research Laboratory, Sony Group Corporation, 1-7-1 Konan, Minato-ku, Tokyo, 108-0075, Japan \\
       $\star$ Corresponding author: hiroyuki.tezuka@sony.com \\
       $\dagger$ These authors contributed equally to this work.
}

\maketitle

\begin{abstract}

The discovery of causal relations from observed data has attracted significant interest from disciplines such as economics, social sciences, epidemiology, and biology. 
In practical applications, considerable knowledge of the underlying systems is often unavailable, and real data are usually associated with nonlinear causal structures, which makes the direct use of most conventional causality analysis methods difficult.
This study proposes a novel quantum Peter-Clark (qPC) algorithm for causal discovery that does not require any assumptions about the underlying model structures.
Based on conditional independence tests in a class of reproducing kernel Hilbert spaces characterized by quantum circuits, the proposed \textit{qPC} algorithm can explore causal relations from the observed data drawn from arbitrary distributions.
We conducted extensive and systematic experiments on fundamental graph parts of causal structures, demonstrating that the qPC algorithm exhibits significantly better performance, particularly with smaller sample sizes compared to its classical counterpart.
Furthermore, we proposed a novel optimization approach based on Kernel Target Alignment (KTA) for determining hyperparameters of quantum kernels. 
This method effectively reduced the risk of false positives in causal discovery, enabling more reliable inference.
Our theoretical and experimental results demonstrate that the proposed quantum algorithm can empower classical algorithms for robust and accurate inference in causal discovery, supporting them in regimes where classical algorithms typically fail.
In addition, the effectiveness of this method was validated using the datasets on Boston housing prices, heart disease, and biological signaling systems as real-world applications. 
These findings highlight the potential of quantum circuit-based causal discovery methods in addressing practical challenges, particularly in small-sample scenarios, where traditional approaches have shown significant limitations.
\end{abstract}

\keywords{causal discovery \and independence test \and quantum kernel \and kernel target alignment}

\section{Introduction}
\label{section:intro}
Deciphering causal relations among observed variables is a crucial problem in the social and natural sciences.
Historically, interventions or randomized experiments have been used as standard approaches to assess causality among observed variables (\cite{pearl2018book}).
For example, randomized controlled trials have been commonly used in clinical research to assess the potential effects of drugs.
However, conducting such interventions or randomized experiments is often challenging due to ethical constraints and high costs.
Alternatively, causal discovery provides practical methods for inferring causal relations between variables from observed data, extending beyond correlation analysis (\cite{spirtes2001causation,glymour2019review,vowels2022d,camps2023discovering,hasan2023survey}).
The Peter--Clark (PC) algorithm~(\cite{spirtes2001causation}), a widely accepted algorithm for causal discovery, yields an equivalence class of directed acyclic graphs (DAGs) that captures causal relations (see Fig.~\ref{figure:qpc} (a) for an overview of the PC algorithm).
The PC algorithm does not assume any specific statistical models or data distributions, unlike the other methods, including the linear non-Gaussian acyclic model (LiNGAM)~(\cite{10.5555/1248547.1248619, 10.5555/1953048.2021040}), NOTEARs~(\cite{zheng2018dags}), the additive noise model~(\cite{hoyer2008nonlinear}), the post nonlinear causal model~(\cite{zhang2012identifiability}), and the Greedy Equivalence Search (GES) algorithm~(\cite{chickering2002optimal}).
Thus, applications of the PC algorithm and its variants have elucidated causal relations from various observed data spanning from natural science to engineering (\cite{le2013inferring,runge2019inferring,nowack2020causal,castri2023enhancing}).
In the PC algorithm, kernel methods can be used for conditional independent tests, a process known as kernel-based conditional independence test (KCIT) (\cite{10.5555/3020548.3020641, zhang2012kernelbasedconditionalindependencetest}). This approach enables applications for various types of data, including those characterized by nonlinearity and high dimensionality (\cite{zhang2012kernelbasedconditionalindependencetest,strobl2019approximate,runge2019inferring}).

Although the PC algorithm using KCIT can be applied to both linear and nonlinear data without making any assumptions about the underlying models, its performance depends on the choice of kernels.
Empirically, kernels are often chosen from representative classes such as Gaussian, polynomial, and linear kernels (\cite{zheng2024causal}).
Alternatively, quantum models that embed data in an associated reproducing kernel Hilbert space (RKHS) have recently been developed, providing a class of algorithms called quantum kernel methods~(\cite{schuld2021supervised,jerbi2023quantum,thanasilp2024exponential,glick2024covariant,kawaguchi2023application}) (see an example of quantum circuits in Fig.~\ref{figure:qpc} (b)). 
Among them, the kernel-based LiNGAM extended with quantum kernels~(\cite{kawaguchi2023application}) demonstrates potential advantages over classical methods, such as accurate inference with small sample sizes~(\cite{maeda2023estimation}), as suggested in supervised learning contexts~(\cite{caro2022generalization}).
However, the quantum LiNGAM (qLiNGAM) (\cite{kawaguchi2023application}) assumes linear causal relations, which limits its applicability to real-world problems.

\begin{figure*}[t]
  \includegraphics[width=0.95\textwidth]{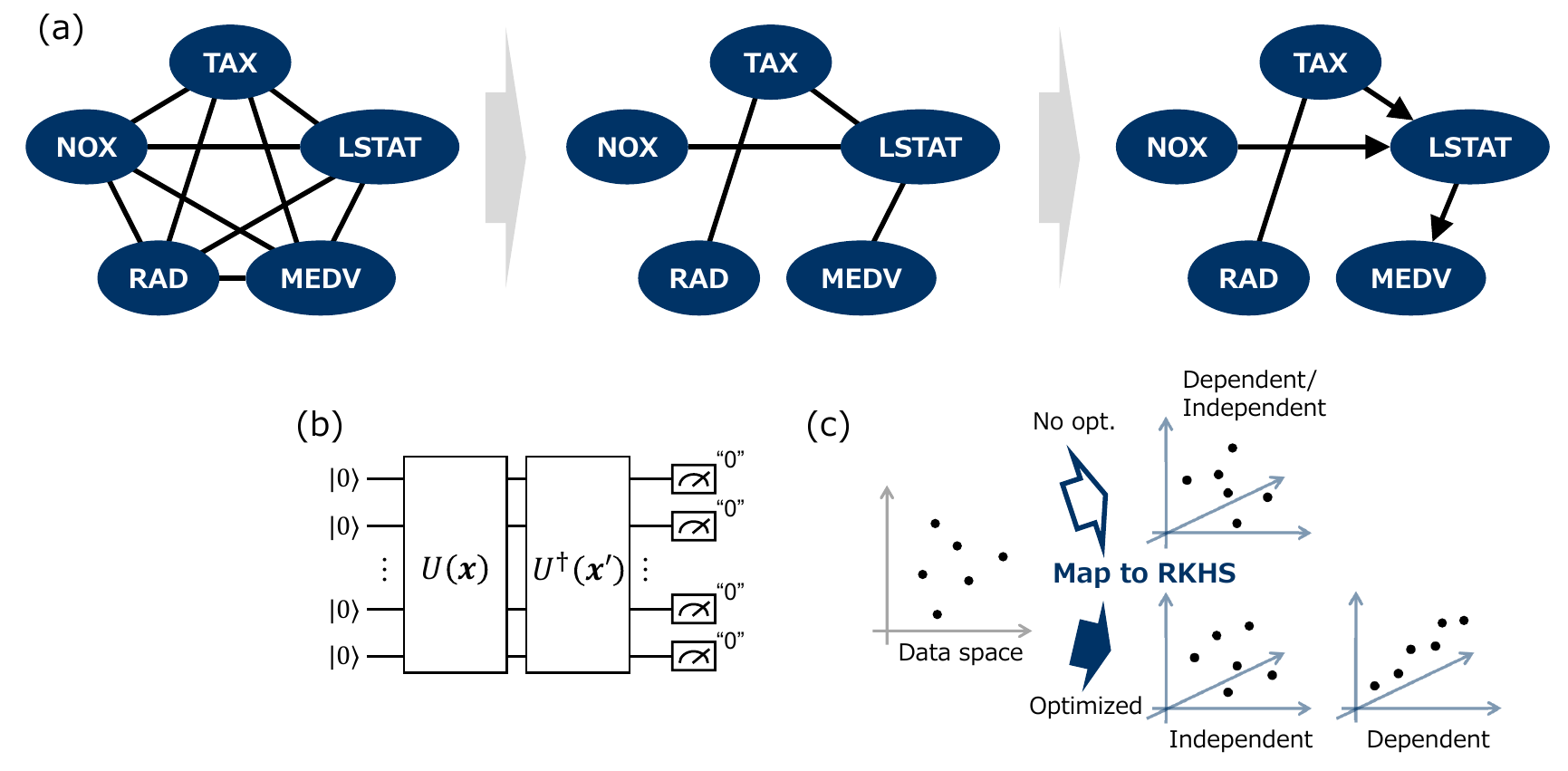}
  \caption{Schematic of the proposed quantum Peter--Clerk (qPC) algorithm and our optimization method based on kernel target alignment (KTA). 
  (a) Overview of the qPC algorithm. 
  Left: The graph representation of an initial input. The qPC algorithm identifies causal relations among random variables and represents them as complete, partially directed acyclic graphs (CPDAGs). The qPC algorithm begins with a complete undirected graph, where each node represents a random variable, and each edge represents the correlation between two random variables. 
  The middle: The graph of the (conditional) independence among the random variables. The algorithm prunes redundant edges by performing the (conditional) independence test between two random variables conditioned on other random variables. Note that when performing the conditional independence test between any two random variables, the set of random variables used for conditioning is recorded. 
  Right: The resulting causal graph. The edges can be oriented following the rules (the details are described in Appendix \ref{app:pc}). 
  (b) Quantum circuit for a kernel. We defined the kernel, $k(x, y)$, for the KCIT as the inner product of quantum states $U_{\theta}(\mathbf{x})\ket{0}^{\otimes n}$ and $U_{\theta}(\mathbf{x'})\ket{0}^{\otimes n}$ generated from the parameterized unitary $U_{\theta}$. 
  (c) Overview of kernel optimization for independence test in causal discovery. If an inappropriate and non-optimized kernel is used for the independence test, it fails to detect the dependent or independent relation between variables accurately. The optimized kernel can disentangle complex relations between variables, allowing for the accurate discrimination of dependent or independent relations in statistical tests.} 
  \label{figure:qpc}
\end{figure*}

Quantum-enhanced causal inference and discovery for small-sample data show promise but face challenges.
First, existing quantum models have failed to address nonlinear causal relations.
Second, similar to classical kernels, the performance of quantum kernel methods depends critically on the choice of quantum circuits used~(\cite{shaydulin2022importance}), and systematic approaches for selecting appropriate quantum kernels in causal discovery are still lacking. 
In most previous studies that employed classical methods, kernel parameters, such as the median strategy, were often selected heuristically \cite{zheng2024causal}. 
Moreover, no established methods exist for setting the hyperparameters of quantum circuits.
Finally, it remains unclear why causal inference using quantum kernels outperforms classical methods for small sample data.

To address these challenges, we propose the quantum PC (qPC) algorithm, which leverages the quantum kernel in the independence tests of the PC algorithm (Fig.~\ref{figure:qpc}).  
We then propose a novel method based on \textit{kernel target alignment (KTA)}~\cite{NIPS2001_1f71e393} to determine the appropriate hyperparameters in quantum kernels for causal discovery.
The proposed method enables the setting of kernels with objective criteria and eliminates arbitrariness in kernel method applications.  
Furthermore, we discuss how the qPC algorithm can enhance inference accuracy in small sample sizes.
Using KTA, we demonstrate that the quantum models we used can effectively learn to produce kernels with high independence detection capabilities.
To demonstrate that our optimization method based on the KTA facilitates accurate causal discovery by the qPC algorithm through the selection of appropriate kernels, we used simulations based on three-node causal graphs (Fig.~\ref{figure:solo_sumo}(a)), which are the fundamental blocks of general causal graphs.

To validate the practical effectiveness of the qPC algorithm, we conducted comprehensive evaluations using both quantum and classical data sources. 
Our first simulation, motivated by the superiority of quantum kernels in small-sample regimes, employs quantum circuit models to generate data from which causal discovery methods infer the underlying causal relations.
While the data from quantum models can highlight the characteristics of the qPC algorithm, it is desirable to use classical data to estimate the typical performance of the quantum method using the proposed kernel choice process in practical applications.
Thus, we assessed the situations in which we observed data drawn from classical systems.
The optimization method based on the KTA bridges the gap between the qPC algorithm and realistic data.
Using the proposed kernel choice method, we demonstrate the applicability of the qPC algorithm to real and synthetic data.
The real data include those from the Boston housing price (\cite{harrison1978hedonic}) and clinical observations related to heart disease (\cite{ahmad2017survival}), and biological signaling systems (\cite{sachs2005causal}.
The results obtained by the qPC algorithm provide insights that align with domain knowledge, which classical methods cannot, and highlight the usefulness of the quantum method for small datasets.

\section{qPC algorithm}
\label{section:qpc}

\subsection{Overview of the qPC algorithm}
\label{sec:results_qpc}

We propose the qPC algorithm for causal discovery, which employs quantum kernel methods (\cite{schuld2021supervised}) to embed classical data into quantum states (Fig. \ref{figure:qpc} (c)).
The qPC algorithm is an extension of the PC algorithm for causal inference. 
It utilizes a conditional independence test implemented via the KCIT with quantum kernels composed of data-embedded quantum states as a natural extension of the Gaussian kernel.

The original PC algorithm~(\cite{spirtes1991algorithm,spirtes2001causation}) offers CPDAGs that capture the causal relations between variables from their observed data (Appendix \ref{app:pc}).
This algorithm is a nonparametric method that does not consider underlying statistical models.
The KCIT is introduced because of its powerful capacity to infer causality in data with nonlinearity and high dimensionality (\cite{10.5555/3020548.3020641, zhang2012kernelbasedconditionalindependencetest}).

Specifically, the qPC algorithm involves two main steps: determining unconditional and conditional independence among variables and orienting causality relations (see the overview of the PC algorithm in Appendix~\ref{app:pc}).
The qPC algorithm outputs CPDAGs, which capture the causal relations among the observed variables, featuring both directed and undirected edges between them (Fig. \ref{figure:qpc} (a)).
It relies on the KCIT framework (see Appendix~\ref{app:kcit} for the details of the KCIT), where the original data are embedded into feature spaces to detect independence (Fig. \ref{figure:qpc} (b)).
Appropriate embedding in KCIT facilitates the disentangling of complex nonlinear relations in the original data space, which often leads to accurate results in statistical hypothesis tests, especially when dealing with high-dimensional or nonlinear data~(\cite{10.5555/3020548.3020641, zhang2012kernelbasedconditionalindependencetest}).
The qPC algorithm leverages quantum kernels associated with the quantum state to embed data into the RKHS defined by quantum circuits.
Quantum kernels are defined by $k_Q(\mathbf{x}, \mathbf{x}')=\mathrm{Tr}[\rho(\mathbf{x})\rho(\mathbf{x}')]$, where input $\mathbf{x}$ is encoded into the quantum circuits generating state $\rho(\mathbf{x})$.
Our proposed quantum circuit has hyperparameters analogous to the widths of the Gaussian kernels.

\subsection{Details of the quantum kernel-based conditional tests for the qPC algorithm}\label{subsec:qkct}

The KCIT~(\cite{10.5555/3020548.3020641, zhang2012kernelbasedconditionalindependencetest}) is a hypothesis test for null hypothesis $X \perp \!\!\!\! \perp Y \ | \ Z$ between random variables $X$ and $Y$ given $Z$. 
It was developed as a conditional independence test by defining a simple statistic based on HSIP of two centralized conditional kernel matrices and deriving its asymptotic distribution under the null hypothesis (see Appendix \ref{app:kcit} for details).
Unconditional independence statistic $T_{UI}$ is defined as 
\begin{eqnarray}
    T_{UI} := \frac{1}{n} {\rm Tr} \bigl[ \widetilde{\mathbf{K}}_X \widetilde{\mathbf{K}}_Y \bigr],
\end{eqnarray}
where $\widetilde{\mathbf{K}}_X$ and $\widetilde{\mathbf{K}}_Y$ are the centralized kernel matrices \textit{i.i.d.} of size $n$ for $X$ and $Y$.
Under the null hypothesis that $X$ and $Y$ are statistically independent, it follows that the Gamma distribution
\begin{eqnarray}
    p(t) = t^{k-1} \frac{e^{-t/\theta}}{\theta^k \Gamma (k)},
\end{eqnarray}
where shape parameter $k$ and scale parameter $\theta$ are estimated by 
\begin{eqnarray}
    k &=& \frac{{\rm Tr} \bigr[ \widetilde{\mathbf{K}}_X \bigl]^2 {\rm Tr} \bigr[ \widetilde{\mathbf{K}}_Y \bigl]^2}{2 {\rm Tr} \bigr[ \widetilde{\mathbf{K}}_X^2 \bigl] {\rm Tr} \bigr[ \widetilde{\mathbf{K}}_Y^2 \bigl]}, \\
    \theta &=& \frac{2 {\rm Tr} \bigr[ \widetilde{\mathbf{K}}_X^2 \bigl] {\rm Tr} \bigr[ \widetilde{\mathbf{K}}_Y^2 \bigl]}{n^2 {\rm Tr} \bigr[ \widetilde{\mathbf{K}}_X \bigl] {\rm Tr} \bigr[ \widetilde{\mathbf{K}}_Y \bigl]}.
\end{eqnarray}
The conditional independence statistic, $T_{CI}$, is defined as
\begin{eqnarray}
    T_{CI} := \frac{1}{n} \mathrm{Tr} \bigl[ \widetilde{\mathbf{K}}_{\ddot{\mathbf{X}} | \mathbf{Z}} \widetilde{\mathbf{K}}_{\mathbf{Y} | \mathbf{Z}} \bigr],
\end{eqnarray}
where $\ddot{X} = (X, Z)$, $\widetilde{\mathbf{K}}_{\ddot{X}|Z} = \mathbf{R}_Z \widetilde{\mathbf{K}}_{\ddot{X}} \mathbf{R}_Z$ and 
$\mathbf{R}_Z = \mathbf{I} - \widetilde{\mathbf{K}}_Z (\widetilde{\mathbf{K}}_Z + \epsilon \mathbf{I})^{-1} = \epsilon (\widetilde{\mathbf{K}}_Z + \epsilon \mathbf{I})^{-1}$.
We constructed $\widetilde{\mathbf{K}}_{Y|Z}$ similarly.
Although $T_{CI}$ also approximately follows the gamma distribution under the null hypothesis, parameters $k$ and $\theta$ are described by a matrix based on the eigenvectors $\widetilde{\mathbf{K}}_{\ddot{X}|Z}$ and $\widetilde{\mathbf{K}}_{Y|Z}$.

We employed a quantum kernel to design the kernel matrices.
The most basic quantum kernel is calculated using the fidelity of two quantum states: the embedded data $\mathbf{x}$ and $\mathbf{x'}$, $k(\mathbf{x, x'})=\mathrm{Tr}[\rho(\mathbf{x})\rho(\mathbf{x'}) ]$~(\cite{havlivcek2019supervised}).
Data-embedded quantum states are generated using a parameterized quantum circuit.
As shown in Fig.~\ref{fig:qc_whole}, data $\mathbf{x}$ are mapped into the quantum state via the unitary operation as $U(\mathbf{x}) \ket{0}^{\otimes n} =\Pi^{n_{\mathrm{dep}}}_{i}U_i(\mathbf{x})U_{\mathrm{init}}\ket{0}^{\otimes n}$, where $n$ is the number of qubits and $n_{\mathrm{dep}}$ is the number of data reuploading.
This operation offers the effect of superposition and entanglement between qubits.
Here, if we design an appropriate quantum circuit, the data will be effectively mapped onto the RKHS suitable for the KCIT.
The details of the quantum circuits tested in this study are described in Appendix~\ref{app:qc}.
The key to designing an effective quantum circuit lies in selecting the components of the unitary operation and pre- and post-processing the data.
Pre-processing involves scaling and affine transformations of the embedding data, while post-processing entails designing the observables.
In this study, we introduced only scaling for pre-processing and employed fidelity as the observable parameter for simplicity.

\begin{center}\
\begin{figure*}[h!]
    \centering
    \includegraphics[width=0.7\textwidth]{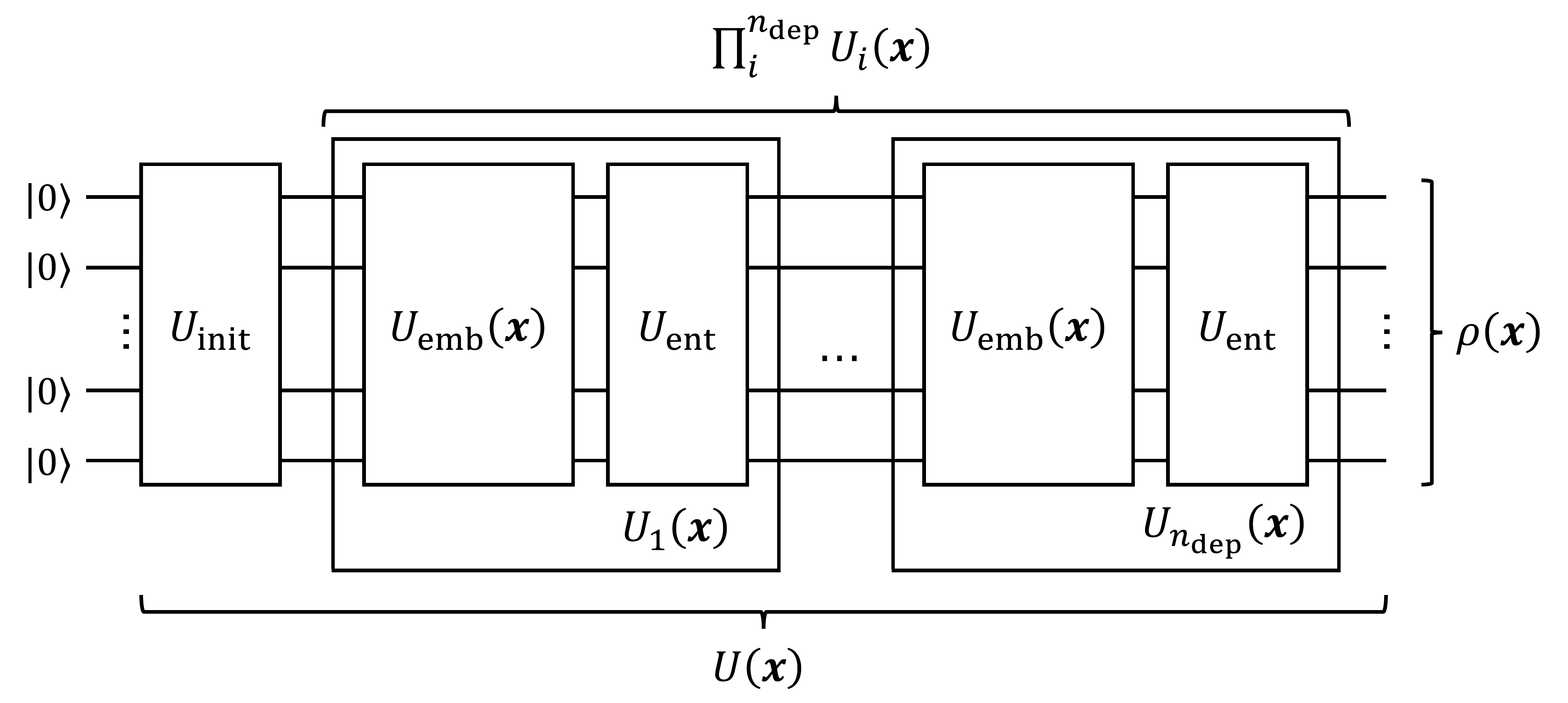}
    \caption{Structure of the quantum circuit for generating the quantum state.}   
    \label{fig:qc_whole}
\end{figure*}
\end{center}\

\section{Optimization of quantum circuits via KTA}\label{sec:results_optimization}

\subsection{Overview of quantum kernel optimization via KTA}

In the experimental section \ref{section:experiments}, we will first confirm that quantum kernels with small sample sizes are effective for causal discovery, where artificial data generated from quantum circuits, which are considered suitable for quantum kernels, are used. 
However, na\"ive quantum kernels are not suitable for classical data in general.
Specifically, the qPC algorithm has one main challenge: in contrast to the classical Gaussian kernel, which has several established guidelines for determining the kernel hyperparameters, the quantum kernel method lacks a standardized approach for selecting its hyperparameters for inference (\cite{shaydulin2022importance}).
Thus, we propose a systematic method for adjusting the hyperparameters in quantum circuits for datasets. 
To demonstrate the applicability of the qPC algorithm to a wide range of data, we compare the performance of the two methods using artificial datasets with classical settings.

Herein, we briefly explain an optimization method for determining the hyperparameters of quantum circuits for kernels based on the normalized Hilbert-Schmidt inner product (HSIP).
Its expectation value is zero if and only if random variables $X$ and $Y$ are independent.
This property enables the use of HSIP as test statistics in statistical hypothesis tests (\cite{10.5555/3020548.3020641, zhang2012kernelbasedconditionalindependencetest}).
The hypothesis test should be improved by selecting a kernel that minimizes the HSIP for uncorrelated data samples while maximizing the HSIP for correlated data samples; in principle, HSIP approaches zero in the uncorrelated case and is nonzero otherwise.
The normalized HSIP \eqref{eq:kta_loss}, which measures the distance between the feature vectors in which two data samples are embedded, is called KTA~(\cite{NIPS2001_1f71e393}).
From the perspective of statistical hypothesis testing, KTA minimization for uncorrelated data reduces the false-positive (FP) risk, whereas KTA maximization for correlated data reduces the false-negatives (FN) risk.
Thus, KTA minimization can be interpreted as enhancing the identifiability of two independent random variables, thereby reducing the likelihood of Type-I errors. 
In contrast, KTA maximization reduces the identifiability of dependent random variables, thereby decreasing the likelihood of Type-II errors.
Here, we focus on KTA minimization for uncorrelated data because the actual relations behind the data are often unavailable, making it challenging to employ the KTA maximization strategy.

\subsection{Details of kernel optimization via KTA}
We discuss kernel selection for the unconditional independence test and propose optimization heuristics based on KTA~(\cite{NIPS2001_1f71e393}) in more detail.
We rely on the fact that the statistics are extracted from the HSIP, which measures the discrepancy between feature vectors.
$X$ and $Y$ are independent if and only if the feature vectors of the embedded data in RKHS are orthogonal.
Intuitively, this leads to the selection of a kernel that minimizes (resp. maximizes) the HSIP for independent (resp. dependent) data samples.

We define the normalized HSIP {\it i.e.}, the KTA
\begin{eqnarray}
    {\rm KTA}(X, Y) := \frac{\mathrm{Tr} \bigl[ \widetilde{\mathbf{K}}_{\mathbf{X}} \widetilde{\mathbf{K}}_{\mathbf{Y}} \bigr]}{\sqrt{\mathrm{Tr} \bigl[ \widetilde{\mathbf{K}}_{\mathbf{X}}^2 \bigr] \mathrm{Tr} \bigl[ \widetilde{\mathbf{K}}_{\mathbf{Y}}^2 \bigr]}}, \label{eq:kta_loss}
\end{eqnarray}
as the evaluation function.
The normalized HSIP can be interpreted as the signal-to-noise ratio ${\rm S/N}$ of the asymptotic gamma distribution under the null hypothesis. This is demonstrated by Theorem~\ref{thm:gamma_param} (Proposition 5 of ref.~(\cite{10.5555/3020548.3020641, zhang2012kernelbasedconditionalindependencetest}) as follows:
\begin{eqnarray}
    {\rm S/N}
    &:=& 
    \frac{\mathbb{E}\left[ \breve{T}_{UI} \mid \mathcal{D} \right]}{\sqrt{\mathbb{V}ar\left[ \breve{T}_{UI} \mid \mathcal{D} \right]}} \\
    &=&
    \frac{\mathrm{Tr} \bigl[ \widetilde{\mathbf{K}}_{\mathbf{X}} \widetilde{\mathbf{K}}_{\mathbf{Y}} \bigr]}{\sqrt{\mathrm{Tr} \bigl[ \widetilde{\mathbf{K}}_{\mathbf{X}}^2 \bigr] \mathrm{Tr} \bigl[ \widetilde{\mathbf{K}}_{\mathbf{Y}}^2 \bigr]}} \\
    &=&
    {\rm KTA}(X, Y). 
    \label{eq:signal_to_noise_ratio}
\end{eqnarray}

The derivatives of Eq.~\eqref{eq:kta_loss} for minimization is expressed as follows:

\begin{lemma} \label{lem:deriv}
    For parameterized kernels $(\mathbf{K}_X)_{xx'} = k_X(x, x' | \theta)$ and $(\mathbf{K}_Y)_{yy'} =  k_Y(y, y' | \phi)$, consider the following function:
    \begin{eqnarray}
        f(\theta, \phi) &=& -\log \left( \frac{{\rm Tr}[\mathbf{K}_X \mathbf{K}_Y]}{\sqrt{{\rm Tr}[\mathbf{K}_X^2]{\rm Tr}[\mathbf{K}_Y^2]}} \right) \nonumber \\
        &=& -\log \left( {\rm KTA} \left(\mathbf{K}_X, \mathbf{K}_Y \right) \right).
    \end{eqnarray}
    The derivatives of the function are then given by
    \begin{eqnarray}
        \frac{\partial f}{\partial \theta} &=& 
        -\frac{{\rm Tr}[ (2 \mathbf{K}_Y - \mathbf{K}_Y \circ \mathbf{I}) \partial_{\theta} \mathbf{K}_X ]}{{\rm Tr}[ \mathbf{K}_X \mathbf{K}_Y ]} \nonumber \\
        &&+ \frac{{\rm Tr}[ (2 \mathbf{K}_X - \mathbf{K}_X \circ \mathbf{I}) \partial_{\theta} \mathbf{K}_X ]}{{\rm Tr}[ \mathbf{K}_X^2 ]}, \\
        \frac{\partial f}{\partial \phi} &=& 
        -\frac{{\rm Tr}[ (2 \mathbf{K}_X - \mathbf{K}_X \circ \mathbf{I}) \partial_{\phi} \mathbf{K}_Y ]}{{\rm Tr}[ \mathbf{K}_X \mathbf{K}_Y ]} \nonumber \\
        &&+ \frac{{\rm Tr}[ (2 \mathbf{K}_Y - \mathbf{K}_Y \circ \mathbf{I}) \partial_{\phi} \mathbf{K}_Y ]}{{\rm Tr}[ \mathbf{K}_Y^2 ]},
    \end{eqnarray}
    where $(\partial_{\theta} \mathbf{K}_X)_{xx'} = \partial_{\theta} k_X(x, x'|\theta)$ and $(\partial_{\theta} \mathbf{K}_Y)_{yy'} = \partial_{\phi} k_Y(y, y'|\phi)$.
\end{lemma}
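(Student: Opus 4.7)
The plan is to proceed by direct differentiation, reducing the gradient of $f$ to derivatives of three log-trace terms. First, I would split the logarithm as
\begin{equation*}
f(\theta,\phi) = -\log\mathrm{Tr}[\mathbf{K}_X\mathbf{K}_Y] + \tfrac{1}{2}\log\mathrm{Tr}[\mathbf{K}_X^2] + \tfrac{1}{2}\log\mathrm{Tr}[\mathbf{K}_Y^2],
\end{equation*}
so that only the first two terms depend on $\theta$ (the third is constant in $\theta$), and by symmetry only the first and third depend on $\phi$. The chain rule then reduces $\partial_\theta f$ to computing $\partial_\theta\mathrm{Tr}[\mathbf{K}_X\mathbf{K}_Y]/\mathrm{Tr}[\mathbf{K}_X\mathbf{K}_Y]$ and $\partial_\theta\mathrm{Tr}[\mathbf{K}_X^2]/(2\mathrm{Tr}[\mathbf{K}_X^2])$.

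Next, I would evaluate each trace derivative using the standard matrix calculus identities $\partial_\theta\mathrm{Tr}[AB]=\mathrm{Tr}[(\partial_\theta A)B]$ and $\partial_\theta\mathrm{Tr}[A^2]=2\mathrm{Tr}[A\,\partial_\theta A]$. The key subtlety is that both kernel Gram matrices are symmetric, so the genuinely independent entries controlled by $\theta$ live on the upper triangle (including the diagonal). Expanding, for example, $\mathrm{Tr}[\mathbf{K}_X\mathbf{K}_Y]=\sum_{ij}(\mathbf{K}_X)_{ij}(\mathbf{K}_Y)_{ij}$ entry by entry and differentiating, each off-diagonal pair $i<j$ contributes twice while each diagonal entry contributes once. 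Using the Hadamard identity $(\mathbf{K}_Y\circ\mathbf{I})_{ij}=\delta_{ij}(\mathbf{K}_Y)_{ii}$ to isolate the diagonal correction, the two cases gather neatly into a single matrix factor $2\mathbf{K}_Y - \mathbf{K}_Y\circ\mathbf{I}$ multiplying the (upper-triangular) derivative matrix $\partial_\theta\mathbf{K}_X$. The same manipulation applied to $\mathrm{Tr}[\mathbf{K}_X^2]$ produces $2\mathbf{K}_X - \mathbf{K}_X\circ\mathbf{I}$, where the prefactor $2$ coming from $\partial_\theta\mathrm{Tr}[A^2]$ cancels the $1/2$ from the split logarithm, leaving the stated expression.

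Finally, the $\partial_\phi f$ formula follows immediately by the $(X,\theta)\leftrightarrow(Y,\phi)$ symmetry of the KTA objective: the numerator term behaves identically with the roles of $\mathbf{K}_X$ and $\mathbf{K}_Y$ exchanged, and the surviving denominator term becomes $\mathrm{Tr}[\mathbf{K}_Y^2]$. The main obstacle is the symmetry bookkeeping in the second step, namely tracking the factor-of-two doubling on off-diagonal entries and subtracting it off on the diagonal via the Hadamard identity, so that the compact matrix form $2\mathbf{K} - \mathbf{K}\circ\mathbf{I}$ emerges consistently and agrees with the convention implicit in writing $\partial_\theta\mathbf{K}_X$ as a single matrix derivative rather than a half-vectorized gradient.
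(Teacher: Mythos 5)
Your proposal is correct and follows essentially the same route as the paper's proof in Appendix D: decompose $-\log{\rm KTA}$ into the three log-trace terms, differentiate each via the chain rule through the entries of the symmetric Gram matrices, and use the symmetric-matrix trace-derivative identity (which you re-derive by counting off-diagonal pairs twice and diagonal entries once, exactly the content of the paper's $\partial\,{\rm Tr}[\mathbf{A}\mathbf{S}]/\partial\mathbf{S}=\mathbf{A}+\mathbf{A}^{T}-\mathbf{A}\circ\mathbf{I}$) to obtain the $2\mathbf{K}-\mathbf{K}\circ\mathbf{I}$ factors, with the $\phi$-derivative following by symmetry. Your closing remark about the convention implicit in contracting against the full matrix $\partial_{\theta}\mathbf{K}_X$ rather than a half-vectorized gradient is the same convention the paper adopts, so the two arguments agree in substance as well as in result.
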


\begin{proof}
    See Appendix~\ref{app:prf}.
\end{proof}

\subsection{Implementation of the kernel optimization}

We now explain the actual implementation of optimizing classical and quantum kernels.
As mentioned in the previous subsection, we minimize KTA in Eq. ~\eqref{eq:kta_loss} for the independent data samples. 
One natural method is to eliminate the correlation between two random variables by random shuffling of given data samples. We then minimize KTA using the gradient descent. 
The random shuffling method generates independent data while preserving the marginal distribution, and minimizing the KTA for such data reduces the signal-to-noise ratio in Eq. ~\eqref{eq:signal_to_noise_ratio} under the null hypothesis. 
From the perspective of statistical hypothesis testing, the KTA minimization reduces the false-positive (FP) risk.
We present the pseudocode for the gradient-based KTA minimization in Algorithm~\ref{alg:KTAMin}.

An alternative method is to sample the assumed marginal distributions in advance, whose moments are estimated using the given data samples. 
Sampling from modeled marginal distributions has the advantage of allowing the generation of large data samples, whereas the random shuffle method does not require prior knowledge of the marginal distribution.
In our experiments, we adopted the random shuffling method for small data samples.
To minimize the KTA, we employed a sampling-based method, such as branch and bound (\cite{Grund1979ForsytheGE, Brent,2020SciPy-NMeth}), rather than a differentiation-based method.

\begin{figure}[!t]
\begin{algorithm}[H]
\caption{KTA Minimization} \label{alg:KTAMin}
\begin{algorithmic}[1]
\Require Data samples $\mathcal{D}_{X,Y} = \{ (x_i, y_i) \}_{i=1}^n$, the target value $\epsilon > 0$, the difference parameter $\eta > 0$, and the sample number $m$.
\Ensure The parameters $(\theta, \phi)$ of ${\rm KTA}(X, Y)$ in Eq.~\eqref{eq:kta_loss}. 
\State {\bf [Initialization]} \label{alg:KTA1}
\State Calculate the means $m_X$, and $m_Y$ from the data samples $\mathcal{D}_{X,Y}$, respectively. \label{alg:KTA2}
\State Calculate the variances $\sigma_X^2$, and $\sigma_Y^2$ from $\mathcal{D}_{X,Y}$, respectively. \label{alg:KTA3}
\State $\theta = (\theta_1, ..., \theta_{|\theta|}) \sim \mathcal{N}(0, 1)$. \label{alg:KTA4}
\State $\phi = (\phi_1, ..., \phi_{|\theta|}) \sim \mathcal{N}(0, 1)$. \label{alg:KTA5}
\State Set a positive value larger than $\epsilon$ to $f(\theta, \phi) = -\log {\rm KTA}(X,Y)$. \label{alg:KTA6}
\State {\bf [Main loop]} \label{alg:KTA7}
\While{$f(\theta, \phi)$ is larger than $\epsilon$} \label{alg:KTA8}
    \State $X = (x_1, ..., x_m) \sim \mathcal{N}(m_X, \sigma_X)$. \label{alg:KTA9}
    \State $Y = (y_1, ..., y_m) \sim \mathcal{N}(m_Y, \sigma_Y)$. \label{alg:KTA10}
    \State Calculate the centralized kernel matrix $\widetilde{\mathbf{K}}_X$, and $\widetilde{\mathbf{K}}_Y$ from $(X,Y)$, respectively. \label{alg:KTA11}
    \State Calculate $\partial_{\theta} f = -{\rm Tr}[ (2 \widetilde{\mathbf{K}}_Y - \widetilde{\mathbf{K}}_Y \circ \mathbf{I}) \partial_{\theta} \widetilde{\mathbf{K}}_X ] / {\rm Tr}[ \widetilde{\mathbf{K}}_X \widetilde{\mathbf{K}}_Y ] + {\rm Tr}[ (2 \widetilde{\mathbf{K}}_X - \widetilde{\mathbf{K}}_X \circ \mathbf{I}) \partial_{\theta} \widetilde{\mathbf{K}}_X ] / {\rm Tr}[ \widetilde{\mathbf{K}}_X^2 ]$. \label{alg:KTA12}
    \State Calculate $\partial_{\phi} = -{\rm Tr}[ (2 \widetilde{\mathbf{K}}_X - \widetilde{\mathbf{K}}_X \circ \mathbf{I}) \partial_{\phi} \widetilde{\mathbf{K}}_Y ] / {\rm Tr}[ \widetilde{\mathbf{K}}_X \widetilde{\mathbf{K}}_Y ] + {\rm Tr}[ (2 \widetilde{\mathbf{K}}_Y - \widetilde{\mathbf{K}}_Y \circ \mathbf{I}) \partial_{\phi} \widetilde{\mathbf{K}}_Y ] / {\rm Tr}[ \widetilde{\mathbf{K}}_Y^2 ]$. \label{alg:KTA13}
    \State $\theta \leftarrow \theta + \eta \partial_{\theta} f$. \label{alg:KTA14}
    \State $\phi \leftarrow \phi + \eta \partial_{\phi} f$. \label{alg:KTA15}
    \State Calculate and update $f(\theta, \phi)$. \label{alg:KTA16}
\EndWhile \label{alg:KTA17}
\end{algorithmic}
\end{algorithm}
\end{figure}

\section{Experiments}
\label{section:experiments}

\subsection{Detection of fundamental causal graph structures}\label{sec:results_solosumo}
To demonstrate how the qPC algorithm can effectively retrieve the underlying causal structures, we applied it to synthetic data from fundamental causal relations with three nodes, collider, fork, chain, and independent structures (Fig. \ref{figure:solo_sumo} (a)) (\cite{pearl2018book}).
These elements capture any local part of the general causal graphs, thereby providing a summarized assessment of causal discovery methods.
In particular, we assume that source random variables are generated through observations in quantum circuits with random variable inputs and that the other nodes receive their inputs through a relation defined by the function $f$ and the external noise $\epsilon$, such as $Z =f(X, Y)+\epsilon$ (Fig. \ref{figure:solo_sumo} (b)).
Specifically, random values $\mathbf{x}$ sampled from the Gaussian distributions were used as inputs to the data embedder of the quantum circuit. We measured observables $M_a$, that is, $M_a=\mathrm{Tr}[O_a\rho(\mathbf{x})]$, $O_a=(\sigma_a +1)/2$, $a \in \{x, z\}$, where $\sigma_x$ and $\sigma_z$ are Pauli operators.
We then prepared a dataset for causal discovery using algebraic operations on the measured values.
Consequently, the data distribution is in general far from a typical probability distribution such as a Gaussian distribution.
This setting aims to highlight that under such data generation processes, the quantum kernels can typically be superior to classical kernels in accurately reproducing the underlying causal structures.
Because the qPC or PC algorithm yields CPDAGs, we evaluate the accuracy by considering Markov equivalence; in this case, the fork and chain should not be distinguished. %

\begin{figure*}[t!]
    \centering
   \includegraphics[width=0.7\textwidth]{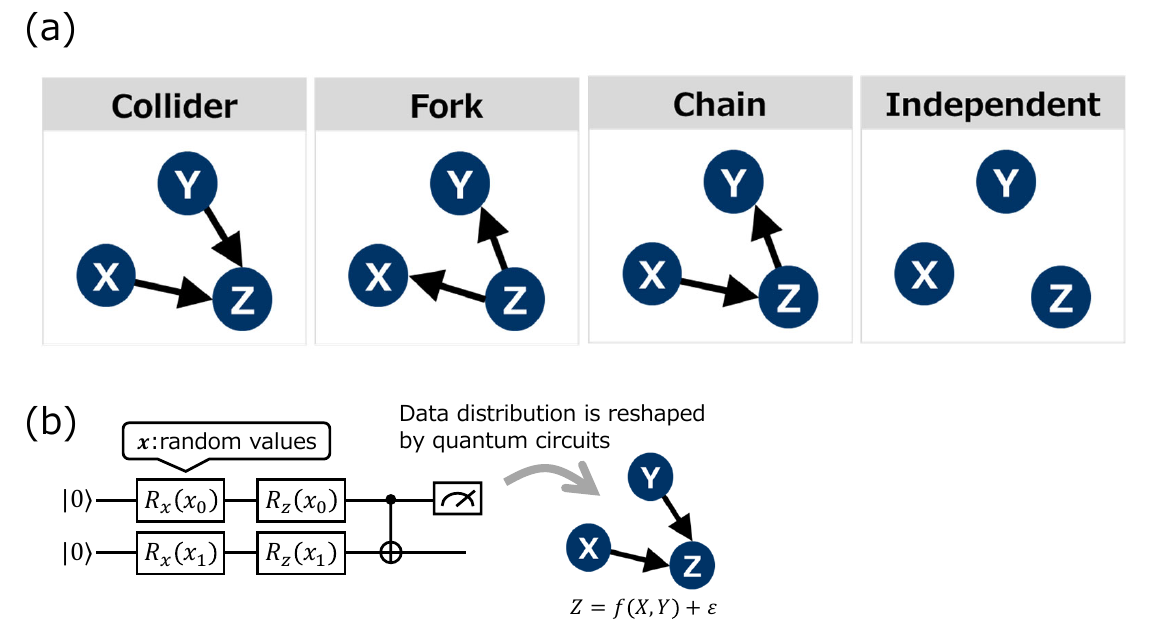}
   \includegraphics[width=0.96\textwidth]{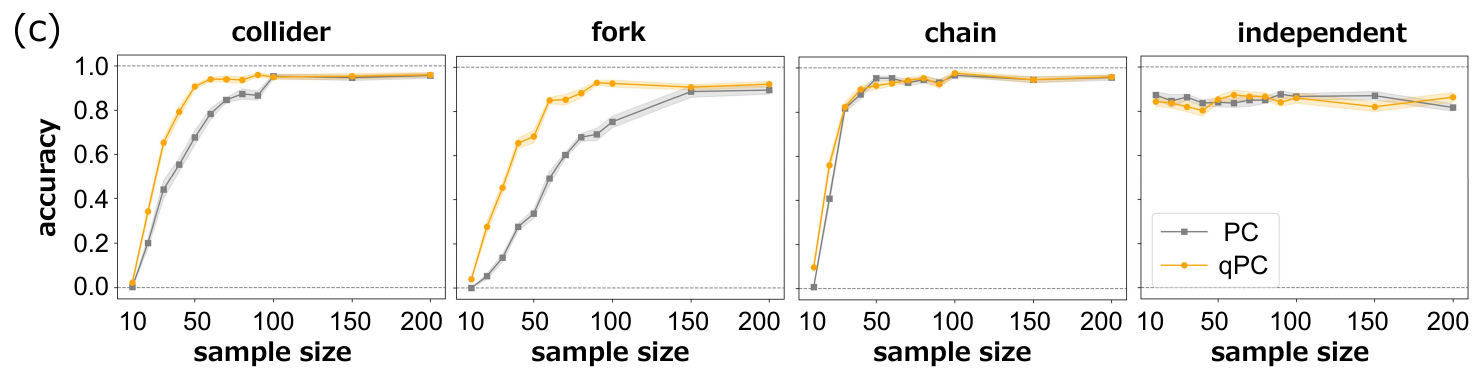}
   \caption{Characteristic performance of the qPC algorithm. 
   (a) Basic causal graphs under three variables with their corresponding dependent and independent relations.
   (b) Data generation with quantum models. The source variables were drawn from quantum circuits with random variable inputs, and the other variables were determined by a causal structure.
   (c) Accuracy of the PC and qPC algorithms for the four causal patterns with different sample sizes. The shaded regions represent the standard errors from 10 different simulations.}
    \label{figure:solo_sumo}
\end{figure*}

Comparisons of the performances of the classical PC and qPC algorithms for causal junctions are shown in Fig.~\ref{figure:solo_sumo} (c).
For chain or independent structures, we observe no significant differences between the classical and quantum methods.
However, for the collider or fork, the quantum kernel outperformed the classical kernel for small sample sizes.
The results of the performance comparison may be questionable since the fork and chain are Markov equivalent. 
However, because the random variable $Z$ constructed from the quantum circuit occupies different positions in the fork and chain, the difficulty of the independence and conditional independence tests in the PC algorithm varies between the fork and chain cases.
In the chain case the random variables are added and mixed with the external noises, while the random variables are not contaminated in the fork case.
The superior performance of the qPC algorithm may have resulted from the inductive bias of the models. 
The data generation process is based on the observation of quantum circuits, which can be related to the quantum kernels used.
In the following sections, we investigate more general cases using datasets unrelated to quantum models.

\subsection{Causal discovery with optimized quantum circuits}
To evaluate the performance of the qPC algorithm using our optimization method, we conducted an experiment in which the data were drawn from a classical setting with the same three fundamental causal graphs as those in Fig.~\ref{figure:solo_sumo} (a).
Figure~\ref{figure:qpc_optimization} (a) shows the typical behaviors of the KTA and the scaling parameter during the optimization process, and the difference in statistics between the default and optimized kernels is shown in Fig.~\ref{figure:qpc_optimization} (b).
Through optimization, the KTA was minimized for the independent data, and correspondingly, the scaling parameter approached the optimal value, as shown in Fig.~\ref{figure:qpc_optimization} (a).
A comparison of the gamma distributions defined in Eq.~(\ref{eq:gamma_dis}), which are the approximation of the distribution of Eq.~(\ref{eq:asymp}), induced by the default and optimized and quantum kernels, is shown in Fig.~\ref{figure:qpc_optimization} (b).
This indicates that the false-positive (FP) probability was substantially suppressed after optimization.
Figure~\ref{figure:qpc_optimization} (c) shows the accuracy over different sample sizes for three cases: the PC with Gaussian kernels of heuristic width choice and the qPC algorithms with quantum kernels of default and optimized scaling parameters.
The qPC algorithm with the default scaling parameters collapses into the collider structure.
However, the optimization of the scaling parameters drastically improved its performance.
The qPC algorithm with optimized parameters performed better than the PC algorithm in the small-size regime.
Figure.~\ref{figure:qpc_optimization} (d) shows the ROC curves for three causal patterns with a sample size $50$.
This suggests that the qPC algorithm, with optimized scaling parameters, can achieve the best performance when the level of significance is set appropriately.
These results indicate that reducing the false-positive (FP) risk yields quantum kernels that surpass classical kernels, even for classical datasets with small sample sizes.

\begin{center}\
\begin{figure*}[t!]
    \centering
    \includegraphics[width=0.90\textwidth]{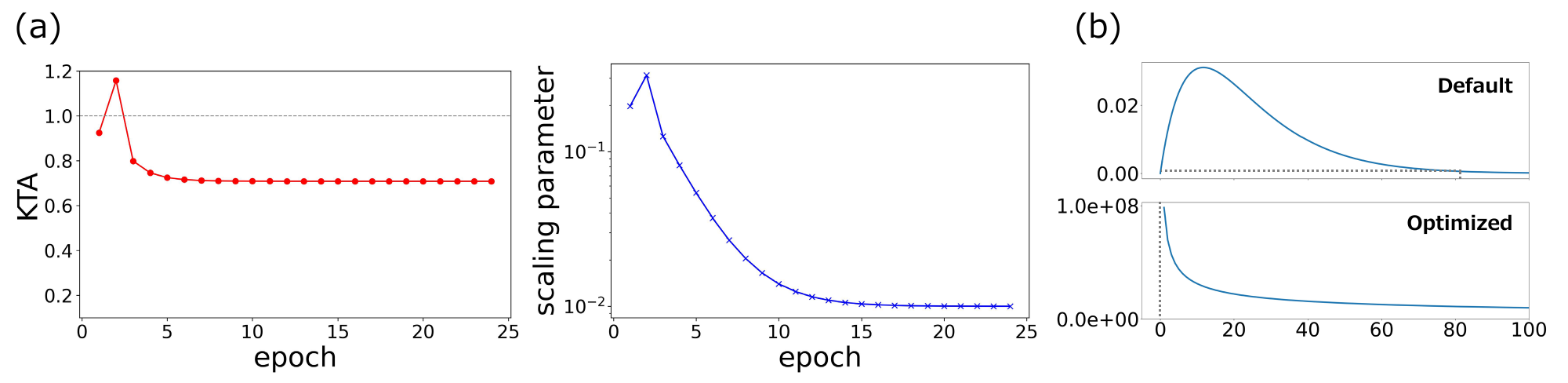}
    \includegraphics[width=0.90\textwidth]{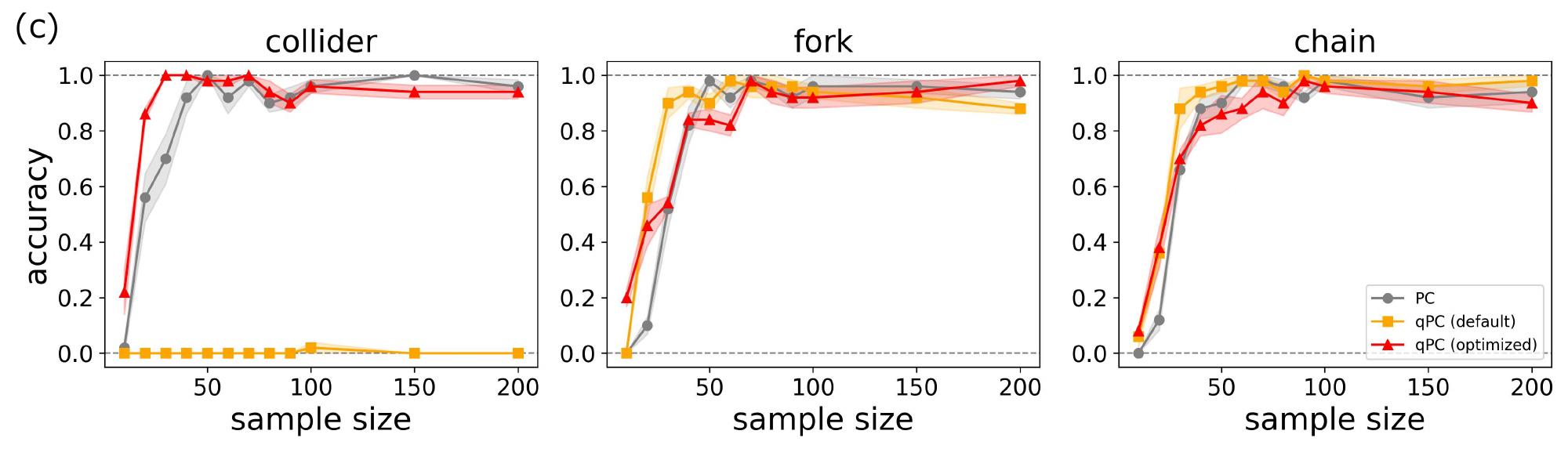}
    \includegraphics[width=0.90\textwidth]{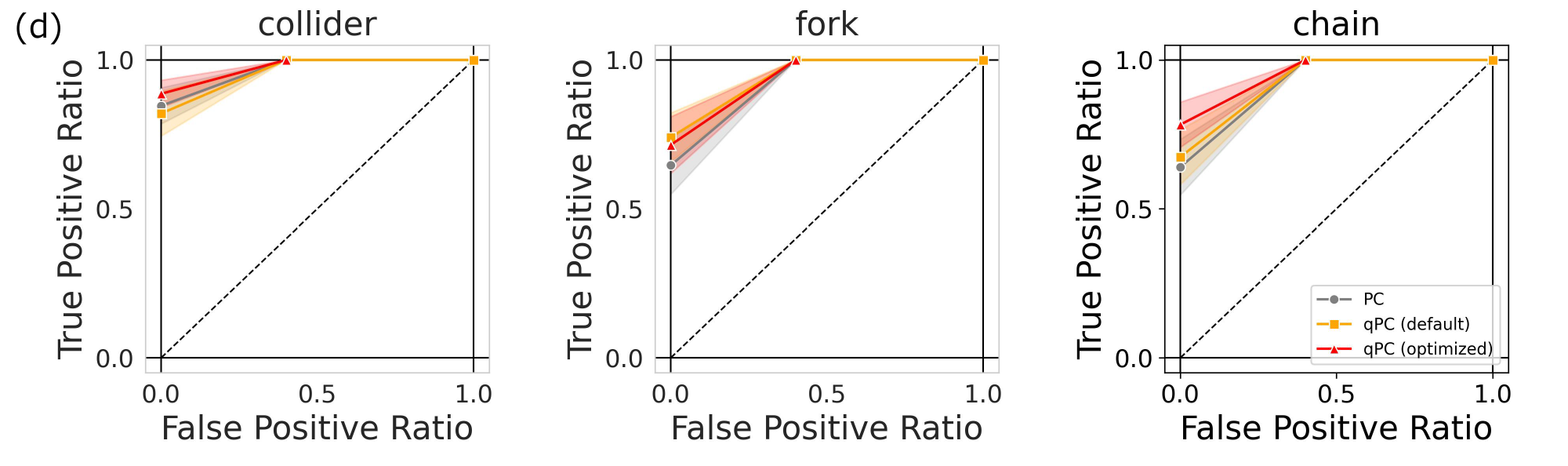}
    \caption{Optimization of the hyperparameters in quantum circuits in the qPC algorithm.
    (a) Changes of the KTA and scaling parameter during optimization. (b) The gamma distribution before and after the optimization process. The endpoint of the dashed box indicates the significance level ($\alpha = 0.05$), corresponding to the tail of the distribution. For (a) and (b), a typical example was chosen from the simulation in (c).
    (c) Accuracy of the PC and qPC with default and optimized hyperparameters with different sample sizes for the three junction patterns.
    (d) ROC curves obtained by the three methods for the junction patterns with $50$ samples.
    The shaded regions represent the standard errors from 10 different simulations.
    In the indendent cases, the three methods showed similar performance, and they are not shown here.}
    \label{figure:qpc_optimization}
\end{figure*}
\end{center}

\subsection{Application of the qPC algorithm to real-world data}\label{sec:real_data}
Here, we demonstrate the application of the qPC algorithm and our optimization method to real-world data.
We used the datasets on the Boston housing price (\cite{harrison1978hedonic}), heart disease (\cite{ahmad2017survival}), and the expression levels of proteins in human immune system cells (\cite{sachs2005causal}).
In the optimization, we sought suitable scaling parameters by minimizing the KTA for the independent distributions obtained by shuffling the original data.

The results of applying the classical PC and qPC algorithms to the Boston housing data are presented in Fig. \ref{figure:boston}.
Panel (a) displays the marginal distributions for the selected variables, most of which appear to deviate significantly from Gaussian or other conventional distributions.
Using the classical PC with KCIT for the full sample data ($N=394$), we obtained the CPDAG shown in Fig. \ref{figure:boston} (b), which captures reasonable causal relations among the variables.
However, the small sample size obscures the causal relations between them, and the PC algorithm failed to reconstruct the CPDAG under the same conditions, such as the level of significance, as shown in Fig. \ref{figure:boston} (c).
The qPC algorithm with optimized scaling parameters remains capable of providing a more comprehensive estimate of causality, as shown in Fig. \ref{figure:boston} (d), where it detects the potential causes of the price, denoted as the MEDV node.
The closeness between the results of the PC with full samples and those of the qPC with a small part of the whole sample set is consistent with our artificial data experiment. 

\begin{figure*}[t]
\includegraphics[width=0.95\textwidth]{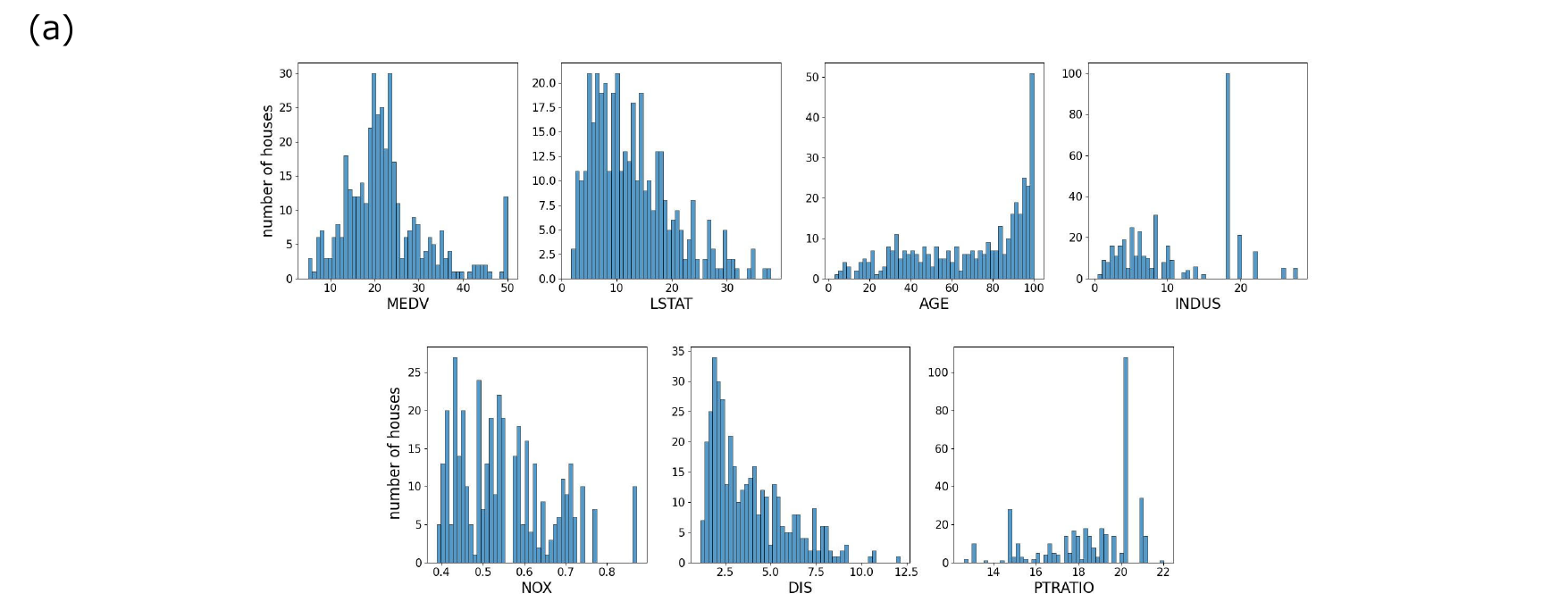}
\includegraphics[width=0.95\textwidth]{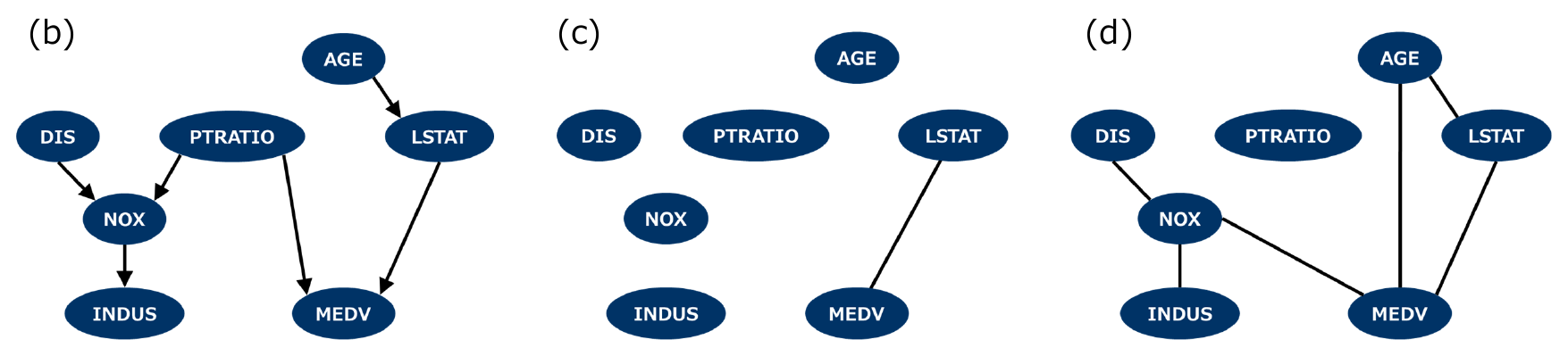}
  \caption{Application to data on housing prices in Boston. 
  (a) Marginal distributions for the variables. 
  (b) CPDAG obtained from the PC algorithm using the Gaussian kernel. The algorithm was executed for the full samples with $N = 394$.
  (c) CPDAG from the PC with a small part of the dataset with $N=50$.
  (d) CPDAG from the qPC using a quantum kernel with the same data as in (c).
  For all cases, the levels of significance were set as $\alpha=0.01.$} \label{figure:boston}
\end{figure*}

We also applied the qPC algorithm to clinical data in which the survival events of heart disease patients and 12 factors were recorded (\cite{ahmad2017survival}).
This dataset comprises 299 patient records, and a previous study (\cite{chicco2020machine}) demonstrated that serum creatinine and ejection fraction are key factors in predicting survival events. These two factors are found to be sufficiently effective in predicting death events in patients with heart failure.
For the full sample set, the classical PC method detected the causal relations between the death event and these two key factors in Fig. \ref{figure:heart} (a).
We showed that for the small subset of the entire datasets$(N=100)$ the qPC with the optimized hyperparameter succeeded in detecting these relations. In contrast, the PC and the qPC with the default hyperparameter did not, as shown in Fig. \ref{figure:heart} (b-d).
In Fig. \ref{figure:heart} (e), we show the performance of the three methods across the sample sizes.
The qPC algorithm with the optimized scaling parameter provided the most accurate description of the causal relations found in the previous study (\cite{chicco2020machine}).
We note that while the qPC algorithm yielded better results for the data on heart disease and housing prices, the performance may depend on the specific data (See Appendix~\ref{app:sachs}).

\begin{figure*}[t!]
    \centering
    \subfigure{\includegraphics[keepaspectratio, scale=0.55]{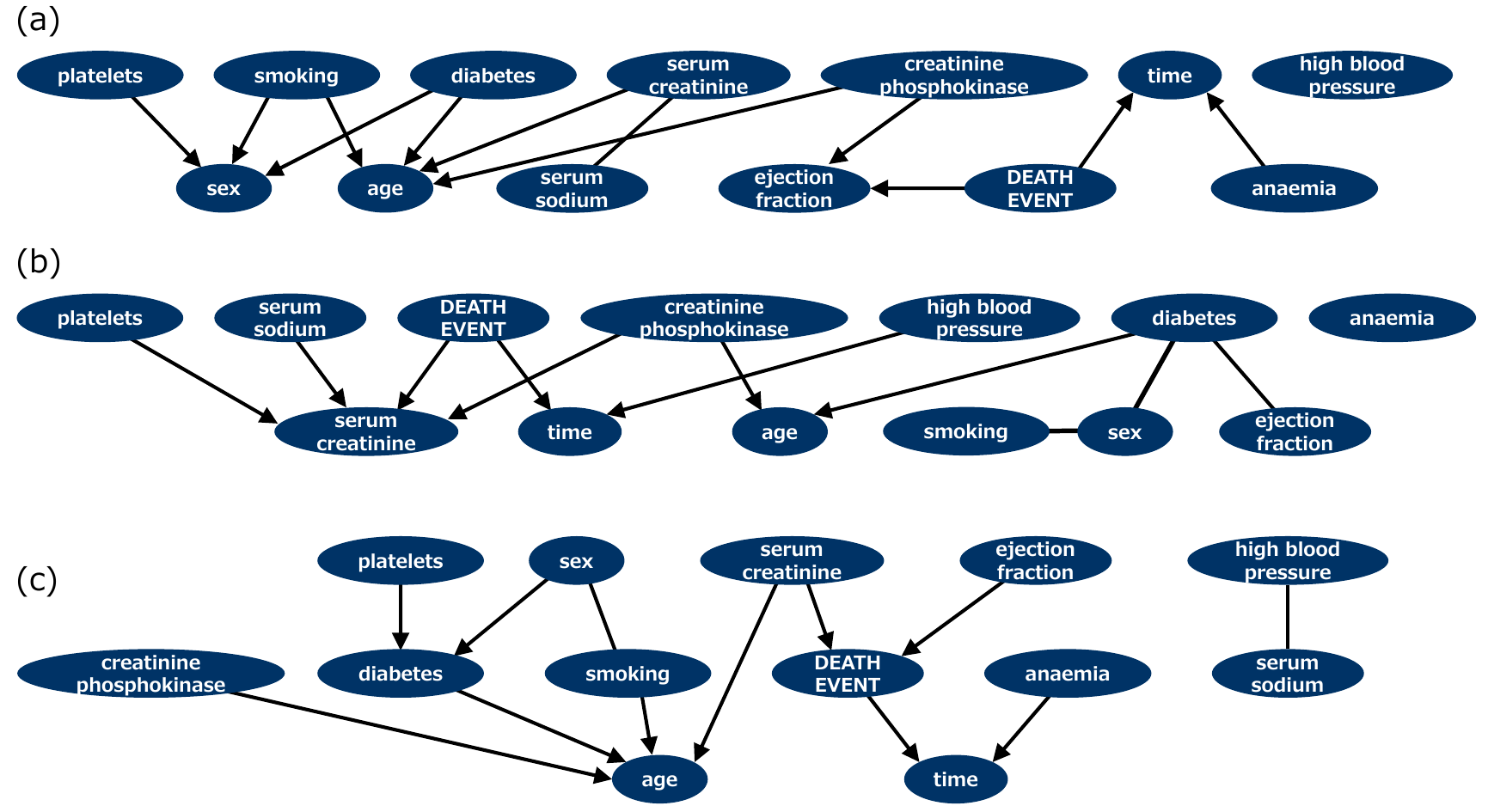}}
    \subfigure{\includegraphics[keepaspectratio, scale=0.32]{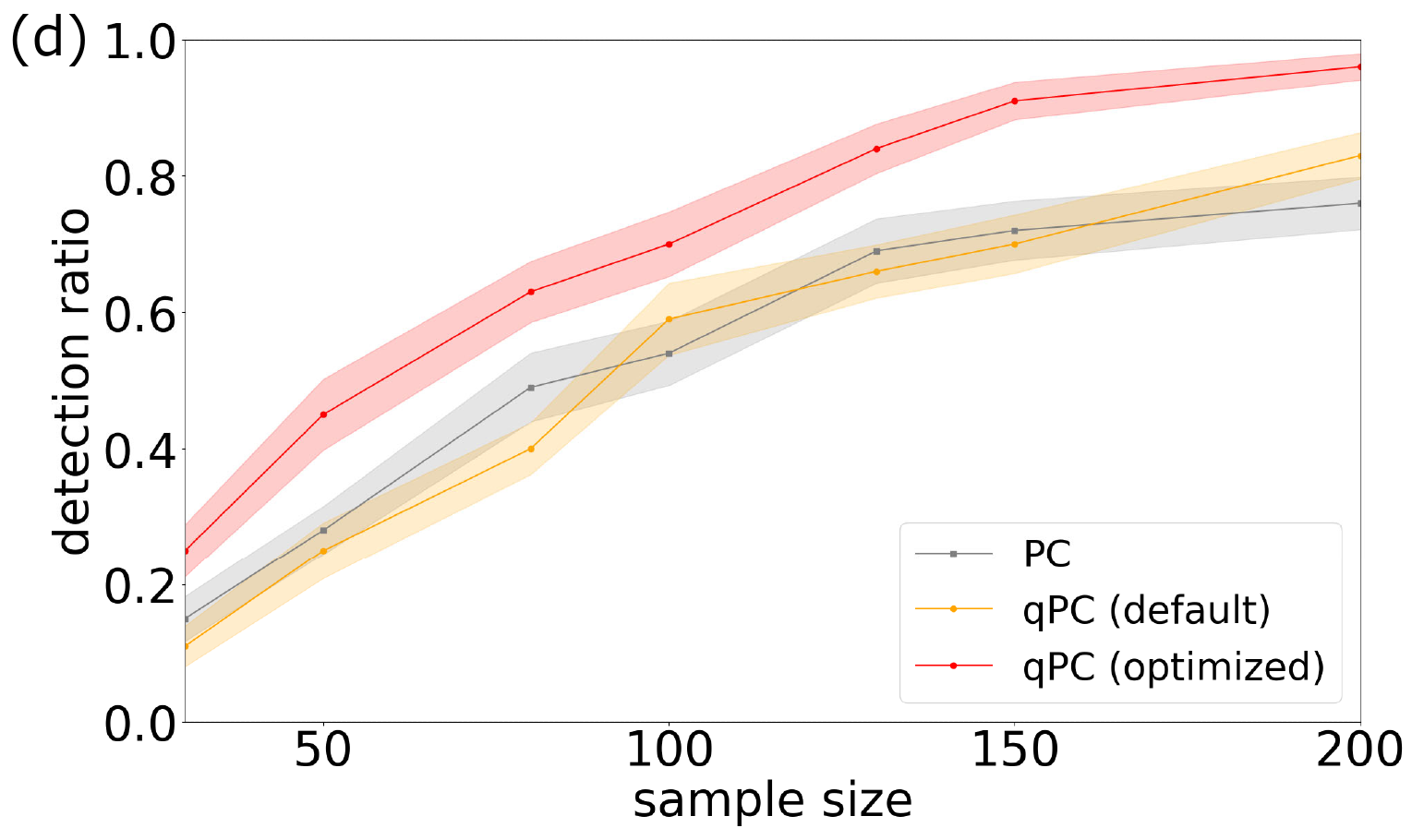}}
    \caption{Application to clinical data on heart disease.
    (a-c) Examples of CPDAGs obtained from different algorithms for the same data.
    (a) PC algorithm using the Gaussian kernel. The algorithm was executed 
    (b) CPDAG obtained from the qPC using a quantum kernel with the default scaling parameter.
    (c) CPDAG obtained from the qPC using a quantum kernel with the scaling parameter optimized via KTA minimization.
    (d) Detection ratios on the links between the death event and the two key factors of serum creatinine and ejection fraction. The shades represent the standard errors over 50 trials.
    For all cases, the levels of significance were set as $\alpha=0.01.$
  }
    \label{figure:heart}
\end{figure*}

\subsection{Experimental details}

Experimental results were generated using the Python package causal-learn (\cite{zheng2024causal}) embedded with our proposed kernel.
We built our quantum models based on the package emulating quantum models with Qiskit (\cite{javadi-abhariQuantumComputingQiskit2024}) and Qulacs (\cite{suzukiQulacsFastVersatile2021}).
In the classical method, we used the KCIT with the heuristic choice of the Gaussian kernel width already implemented in causal-learn, which is one of the methods with the best performance in classical kernels.

In Section \ref{sec:results_solosumo}, our simulations were run with noise ratios $0.05$ for the following relations, where the source variables were drawn from the Gaussian distributions.
In detail, we used the relations of the collider, $z=z_1, x=(z+y)/2, y =x_1^2$, the chain, $z=(z_1+x_1)/2, x=y^2, y=0.5z$, and the fork $z=0.5x, x=(z_1+x_1)/2, y=x^2,$ where $x_1$ and $z_1$ were drawn independently.
To estimate accuracy, we run 30 iterations for each simulation.
The scaling parameters of the quantum models were fixed to $1.0$.
The significance level was set to $\alpha=0.05$.

In Section \ref{sec:results_optimization}, we run our simulation for linear relations with Gaussian variables, unless otherwise described.
For optimization, we created the independent data by shuffling the original data and applied the optimizer to decrease the KTA value of the shuffled data.
We changed the single scaling parameter and searched for its optimal value within the range $[0.01,0.5]$ starting from an initial value of $0.1$.
All data were standardized before applying the causal discovery methods.
In the default quantum models, we used the scaling parameters equivalent to $1$.
In the ROC curves, we changed the level of significance in the set $\{0.999999,\allowbreak\ 0.9,\allowbreak\ 0.75,\allowbreak\ 0.5,\allowbreak\ 0.25,\allowbreak\ 0.2,\allowbreak\ 0.1,\allowbreak\ 0.05,\allowbreak\ 0.01,\allowbreak\ 0.001,\ 0.0001,\ 0.00001\}$. 
The ROC curves require the calculation of the true-positive ratio (TPR) and false-positive ratio (FPR).
We focused on the skeletons of the CPDAGs, considering only the existence or absence of edges between the variables to evaluate the TPR and FPR.
If an edge exists between the two variables, it is judged positive; otherwise, it is judged negative. 
If the estimate and ground-truth match, it is called a true-positive (TP) if an edge is present, and a true negative (TN) if no edge is present. 
Conversely, if the estimate implies that an edge is present and the ground truth does not have an edge, it is called an FP. 
If no edge is inferred in the estimate and an edge is present in the ground truth, it is called an FN.
Using the scores for TP, TN, FP, and FN, TPR and FPR are calculated as $\mathrm{TPR}=\mathrm{TP}/(\mathrm{TP}+\mathrm{FN})$ and $\mathrm{FPR}=\mathrm{FP}/(\mathrm{FP}+\mathrm{TN})$, respectively.

In Section \ref{sec:real_data}, we employed the classical and quantum kernels, which are identical to those used in the previous sections.
For Boston housing data, we used the data source (\cite{bostonkaggleurl}). 
The dataset used for heart disease data can be found in (\cite{heartdeseaseurl}).

\section{Discussion}
We proposed the qPC algorithm for causal discovery by leveraging quantum circuits that generate the corresponding RKHS.
Our simulations demonstrated that the qPC algorithm can surpass the classical method in reconstructing the underlying causal relations, particularly with a small number of samples.
Furthermore, since there is no existing method for determining the hyperparameters of quantum kernels, we propose a method for adaptively choosing quantum kernels for the data.
In the proposed method for kernel choice, we employed the KTA to select quantum kernels suitable for causal discovery, thereby reducing the false-positive (FP) risk for independent cases.
We numerically demonstrated that the optimization method can improve the inference results for both synthetic and real data.
Our experimental results indicate that even for small sizes, quantum kernels can facilitate accurate causal discovery.
This finding suggests that quantum circuits can improve the performance of existing causal discovery methods and expand their applicability to real-world problems.

Although our experiments on artificial and real data suggest the superiority of the qPC algorithm for causal discovery with small datasets compared to the classical PC algorithm, further discussion is needed to unveil the principle behind this phenomenon.
For small sample datasets, we cannot apply the asymptotic theory of the test statistics shown in the KCIT, making it difficult to expect the independence test to perform as theoretically predicted. 
For the KCIT to work effectively for independence tests, data-driven kernel choice may be beneficial; optimization via KTA could enhance the performance of the hypothesis test. 
On the other hand, because such an improvement should be in principle achievable with any kernel, it is reasonable to speculate that the success of the quantum kernel with the dataset used is owing to its inductive bias in quantum models~(\cite{kubler2021inductive}).
Specifically, we observed that optimized quantum kernels tend to exhibit exponentially fast convergence in eigenvalues, which is generally not the case in na\"ive quantum kernels.
We speculate that this property supports effective low-dimensional expression for data and appropriately conducts independence tests.
Although we demonstrated that the qPC algorithm exhibits high accuracy for data generated from quantum circuits, even with default hyperparameters, it fails to capture causal relations from classical data without adjusting the hyperparameters.
Optimization significantly enhances the capacity of the qPC algorithm, making it superior to classical heuristics.
Investigating the properties of quantum kernels, such as their eigenvalues, could provide insights into the underlying mechanisms.
Moreover, the change in the properties of the RKHS associated with the quantum models through optimization and its effect on the independence tests could be studied.

The proposed optimization method based on the KTA increases the applicability of quantum methods.
Our result, shown in Fig. \ref{figure:qpc_optimization}, connects the quantum method with realistic data.
Remarkably, the optimal values of the scaling parameters obtained in our cases are highly compatible with previous results in a supervised learning setting (\cite{shaydulin2022importance}).
This implies that there are parameter regions in which the computational capacity of the quantum kernels is maximized.
Our results could also be used to develop a procedure for heuristic parameter choice in quantum kernels, similar to the one used for Gaussian kernels.
While we chose kernels by minimizing the KTA to decrease the false-positive (FP) probability in this study, other strategies for choosing kernels in independence tests or causal discovery exist.
A study designed kernels for independence tests to maximize test power (\cite{xu2024learning,pogodin2024practical,ren2024learning}).
The main difference is that our method selects kernels to minimize the probability of Type-I errors, whereas their methods aim to reduce Type-II errors.
Another study minimized mutual information (\cite{wang2024optimal}), assuming ridge regression.
In their method, the mutual information is calculated for the obtained causal structures.

Finally, we describe the promising extensions of this study.
First, for simplicity, we assume that no hidden variables affect the causality of the visible variables.
Such confounding factors may change the inferred causal structures. An extended version that incorporates their existence, the FCI algorithm, has been developed (\cite{spirtes2013causal}).
Our algorithm can be used for independence tests within the framework of the FCI algorithm.
In addition, while we focus on static situations in which data are drawn from static distributions, causal discovery has been applied to real-world problems associated with dynamic systems.
Our approach with quantum kernels can be utilized to analyze time-series data with straightforward modifications following the PCMCI algorithm (\cite{runge2019detecting}), which expands the applicability of the qPC algorithms to real-world problems such as meteorology or financial engineering.
In addition, it is possible to develop a more elaborate kernel choice, such as the multiple kernel method (\cite{vedaie2020quantum}), where a combination of multiple kernels is employed, and the optimal solution is obtained via convex optimization. 
These developments will enhance the applicability of the qPC algorithm to various real-world applications.

The present work demonstrates that the quantum-enhanced algorithm can enhance the accuracy of the causal discovery method, particularly for small sample sizes.
Our numerical investigation revealed that the quantum method reconstructed the causal fundamental structures more accurately from small datasets than the classical one.
The introduction of KTA optimization enables us to evaluate optimal quantum kernels without relying on the underlying causal relations.
While the KTA metric provides insights into the types of kernels that yield accurate inference by reducing the false-positive (FP) ratio for independent data, it is not fully understood how the quantum nature elevates the performance of classical methods.
Furthermore, we primarily analyzed the linear cases of causal relations in numerical demonstrations as the initial assessment of the quantum algorithm.
Future work on data with more complicated causal relations or various distributions could offer fundamental insights for practical applications.

\mbox{}
\\
{\bf Acknowledgements:} 
The authors are grateful to Dr. Hiroki Tetsukawa for fruitful discussion.

\mbox{}
\\
{\bf Conflict of interest:} 
The authors declare no competing interests. 

\mbox{}
\\
{\bf Author contribution:}
Y. Maeda and H. Tezuka contributed to the study conception and design.
Y. Terada and Y. Tanaka contributed to manuscript preparation. 
Y. Terada, K. Arai, Y. Tanaka, Y. Maeda, and H. Tezuka commented on and revised the previous versions of the manuscript.
K. Arai, Y. Terada, and H. Tezuka developed the base computation system and conducted experiments to collect and analyze data. 
Y. Terada, Y. Tanaka, K. Arai, and H. Tezuka created all images and drawings.
All the authors have read and approved the final manuscript.

\mbox{}
\\
{\bf Data availability statement:}
The datasets generated and analyzed during the current study are available from the corresponding author upon reasonable request.

\bibliographystyle{spbasic}      %
\bibliography{main}

\begin{thebibliography}{57}
\providecommand{\natexlab}[1]{#1}
\providecommand{\url}[1]{{#1}}
\providecommand{\urlprefix}{URL }
\expandafter\ifx\csname urlstyle\endcsname\relax
  \providecommand{\doi}[1]{DOI~\discretionary{}{}{}#1}\else
  \providecommand{\doi}{DOI~\discretionary{}{}{}\begingroup \urlstyle{rm}\Url}\fi
\providecommand{\eprint}[2][]{\url{#2}}

\bibitem[{Ahmad et~al(2017{\natexlab{a}})Ahmad, Munir, Bhatti, Aftab, and Ali~Raza}]{heartdeseaseurl}
Ahmad T, Munir A, Bhatti SH, Aftab M, Ali~Raza M (2017{\natexlab{a}}) Survival analysis of heart failure patients: A case study. \url{https://plos.figshare.com/articles/Survival_analysis_of_heart_failure_patients_A_case_study/5227684/1}

\bibitem[{Ahmad et~al(2017{\natexlab{b}})Ahmad, Munir, Bhatti, Aftab, and Raza}]{ahmad2017survival}
Ahmad T, Munir A, Bhatti SH, Aftab M, Raza MA (2017{\natexlab{b}}) Survival analysis of heart failure patients: A case study. PloS one 12(7):e0181001, \urlprefix\url{https://doi.org/10.1371/journal.pone.0181001}

\bibitem[{Brent(2002)}]{Brent}
Brent R (2002) Algorithms for minimization without derivatives. Englewood Cliffs, Prentice Hall 19, \doi{10.2307/2005713}

\bibitem[{Camps-Valls et~al(2023)Camps-Valls, Gerhardus, Ninad, Varando, Martius, Balaguer-Ballester, Vinuesa, Diaz, Zanna, and Runge}]{camps2023discovering}
Camps-Valls G, Gerhardus A, Ninad U, Varando G, Martius G, Balaguer-Ballester E, Vinuesa R, Diaz E, Zanna L, Runge J (2023) Discovering causal relations and equations from data. Physics Reports 1044:1--68, \urlprefix\url{https://doi.org/10.1016/j.physrep.2023.10.005}

\bibitem[{Caro et~al(2022)Caro, Huang, Cerezo, Sharma, Sornborger, Cincio, and Coles}]{caro2022generalization}
Caro MC, Huang HY, Cerezo M, Sharma K, Sornborger A, Cincio L, Coles PJ (2022) Generalization in quantum machine learning from few training data. Nat Comm 13(1):4919, \urlprefix\url{https://doi.org/10.1038/s41467-022-32550-3}

\bibitem[{Castri et~al(2023)Castri, Mghames, Hanheide, and Bellotto}]{castri2023enhancing}
Castri L, Mghames S, Hanheide M, Bellotto N (2023) Enhancing causal discovery from robot sensor data in dynamic scenarios. In: Conference on Causal Learning and Reasoning, PMLR, pp 243--258, \urlprefix\url{https://proceedings.mlr.press/v213/castri23a.html}

\bibitem[{Chicco and Jurman(2020)}]{chicco2020machine}
Chicco D, Jurman G (2020) Machine learning can predict survival of patients with heart failure from serum creatinine and ejection fraction alone. BMC medical informatics and decision making 20:1--16, \urlprefix\url{https://doi.org/10.1186/s12911-020-1023-5}

\bibitem[{Chickering(2002)}]{chickering2002optimal}
Chickering DM (2002) Optimal structure identification with greedy search. Journal of machine learning research 3(Nov):507--554

\bibitem[{Cristianini et~al(2001)Cristianini, Shawe-Taylor, Elisseeff, and Kandola}]{NIPS2001_1f71e393}
Cristianini N, Shawe-Taylor J, Elisseeff A, Kandola J (2001) On kernel-target alignment. In: Dietterich T, Becker S, Ghahramani Z (eds) Advances in Neural Information Processing Systems, MIT Press, vol~14, \urlprefix\url{https://proceedings.neurips.cc/paper_files/paper/2001/file/1f71e393b3809197ed66df836fe833e5-Paper.pdf}

\bibitem[{DAUDIN(1980)}]{10.1093/biomet/67.3.581}
DAUDIN JJ (1980) {Partial association measures and an application to qualitative regression}. Biometrika 67(3):581--590, \doi{10.1093/biomet/67.3.581}, \urlprefix\url{https://doi.org/10.1093/biomet/67.3.581}

\bibitem[{Fukumizu et~al(2007)Fukumizu, Gretton, Sun, and Sch{\"o}lkopf}]{10.5555/2981562.2981624}
Fukumizu K, Gretton A, Sun X, Sch{\"o}lkopf B (2007) Kernel measures of conditional dependence. vol~20, \urlprefix\url{https://proceedings.neurips.cc/paper_files/paper/2007/file/3a0772443a0739141292a5429b952fe6-Paper.pdf}

\bibitem[{Glick et~al(2024)Glick, Gujarati, Corcoles, Kim, Kandala, Gambetta, and Temme}]{glick2024covariant}
Glick JR, Gujarati TP, Corcoles AD, Kim Y, Kandala A, Gambetta JM, Temme K (2024) Covariant quantum kernels for data with group structure. Nature Physics 20(3):479--483, \urlprefix\url{https://doi.org/10.1038/s41567-023-02340-9}

\bibitem[{Glymour et~al(2019)Glymour, Zhang, and Spirtes}]{glymour2019review}
Glymour C, Zhang K, Spirtes P (2019) Review of causal discovery methods based on graphical models. Frontiers in genetics 10:524, \urlprefix\url{https://doi.org/10.3389/fgene.2019.00524}

\bibitem[{Gretton et~al(2007)Gretton, Fukumizu, Teo, Song, Sch\"{o}lkopf, and Smola}]{NIPS2007_d5cfead9}
Gretton A, Fukumizu K, Teo C, Song L, Sch\"{o}lkopf B, Smola A (2007) A kernel statistical test of independence. In: Platt J, Koller D, Singer Y, Roweis S (eds) Advances in Neural Information Processing Systems, Curran Associates, Inc., vol~20, \urlprefix\url{https://proceedings.neurips.cc/paper_files/paper/2007/file/d5cfead94f5350c12c322b5b664544c1-Paper.pdf}

\bibitem[{Grund(1979)}]{Grund1979ForsytheGE}
Grund F (1979) Forsythe, g. e. / malcolm, m. a. / moler, c. b., computer methods for mathematical computations. englewood cliffs, new jersey 07632. prentice hall, inc., 1977. xi, 259 s. Zamm-zeitschrift Fur Angewandte Mathematik Und Mechanik 59:141--142, \urlprefix\url{https://api.semanticscholar.org/CorpusID:121678921}

\bibitem[{Harrison~Jr and Rubinfeld(1978)}]{harrison1978hedonic}
Harrison~Jr D, Rubinfeld DL (1978) Hedonic housing prices and the demand for clean air. Journal of environmental economics and management 5(1):81--102, \urlprefix\url{https://doi.org/10.1016/0095-0696(78)90006-2}

\bibitem[{Harrison~Jr and Rubinfeld(2017)}]{bostonkaggleurl}
Harrison~Jr D, Rubinfeld DL (2017) Boston housing dataset. https://www.kaggle.com/datasets/altavish/boston-housing-dataset/data

\bibitem[{Hasan et~al(2023)Hasan, Hossain, and Gani}]{hasan2023survey}
Hasan U, Hossain E, Gani MO (2023) A survey on causal discovery methods for iid and time series data. arXiv:230315027 \urlprefix\url{https://doi.org/10.48550/arXiv.2303.15027}

\bibitem[{Haug et~al(2021)Haug, Bharti, and Kim}]{haug2021capacity}
Haug T, Bharti K, Kim M (2021) Capacity and quantum geometry of parametrized quantum circuits. PRX Quantum 2(4):040309

\bibitem[{Havl{\'\i}{\v{c}}ek et~al(2019)Havl{\'\i}{\v{c}}ek, C{\'o}rcoles, Temme, Harrow, Kandala, Chow, and Gambetta}]{havlivcek2019supervised}
Havl{\'\i}{\v{c}}ek V, C{\'o}rcoles AD, Temme K, Harrow AW, Kandala A, Chow JM, Gambetta JM (2019) Supervised learning with quantum-enhanced feature spaces. Nature 567(7747):209--212

\bibitem[{Hoyer et~al(2008)Hoyer, Janzing, Mooij, Peters, and Sch{\"o}lkopf}]{hoyer2008nonlinear}
Hoyer P, Janzing D, Mooij JM, Peters J, Sch{\"o}lkopf B (2008) Nonlinear causal discovery with additive noise models. Advances in neural information processing systems 21

\bibitem[{{Javadi-Abhari} et~al(2024){Javadi-Abhari}, Treinish, Krsulich, Wood, Lishman, Gacon, Martiel, Nation, Bishop, Cross, Johnson, and Gambetta}]{javadi-abhariQuantumComputingQiskit2024}
{Javadi-Abhari} A, Treinish M, Krsulich K, Wood CJ, Lishman J, Gacon J, Martiel S, Nation PD, Bishop LS, Cross AW, Johnson BR, Gambetta JM (2024) Quantum computing with {{Qiskit}}. \doi{10.48550/arXiv.2405.08810}, \eprint{2405.08810}

\bibitem[{Jerbi et~al(2023)Jerbi, Fiderer, Poulsen~Nautrup, K{\"u}bler, Briegel, and Dunjko}]{jerbi2023quantum}
Jerbi S, Fiderer LJ, Poulsen~Nautrup H, K{\"u}bler JM, Briegel HJ, Dunjko V (2023) Quantum machine learning beyond kernel methods. Nature Communications 14(1):1--8, \urlprefix\url{https://doi.org/10.1038/s41467-023-36159-y}

\bibitem[{Kawaguchi(2023)}]{kawaguchi2023application}
Kawaguchi H (2023) Application of quantum computing to a linear non-gaussian acyclic model for novel medical knowledge discovery. Plos One 18(4):e0283933, \urlprefix\url{https://doi.org/10.1371/journal.pone.0283933}

\bibitem[{K{\"u}bler et~al(2021)K{\"u}bler, Buchholz, and Sch{\"o}lkopf}]{kubler2021inductive}
K{\"u}bler J, Buchholz S, Sch{\"o}lkopf B (2021) The inductive bias of quantum kernels. Advances in Neural Information Processing Systems 34:12661--12673

\bibitem[{Le et~al(2013)Le, Liu, Tsykin, Goodall, Liu, Sun, and Li}]{le2013inferring}
Le TD, Liu L, Tsykin A, Goodall GJ, Liu B, Sun BY, Li J (2013) Inferring microrna--mrna causal regulatory relationships from expression data. Bioinformatics 29(6):765--771, \urlprefix\url{https://doi.org/10.1186/s12911-024-02510-6}

\bibitem[{Maeda et~al(2023)Maeda, Kawaguchi, and Tezuka}]{maeda2023estimation}
Maeda Y, Kawaguchi H, Tezuka H (2023) Estimation of mutual information via quantum kernel method. arXiv:231012396 \urlprefix\url{https://doi.org/10.48550/arXiv.2310.12396}

\bibitem[{Nowack et~al(2020)Nowack, Runge, Eyring, and Haigh}]{nowack2020causal}
Nowack P, Runge J, Eyring V, Haigh JD (2020) Causal networks for climate model evaluation and constrained projections. Nature Communications 11(1):1415, \urlprefix\url{https://doi.org/10.1038/s41467-020-15195-y}

\bibitem[{Pearl and Mackenzie(2018)}]{pearl2018book}
Pearl J, Mackenzie D (2018) The book of why: the new science of cause and effect. Basic books

\bibitem[{Pogodin et~al(2024)Pogodin, Schrab, Li, Sutherland, and Gretton}]{pogodin2024practical}
Pogodin R, Schrab A, Li Y, Sutherland DJ, Gretton A (2024) Practical kernel tests of conditional independence. arXiv \urlprefix\url{https://doi.org/10.48550/arXiv.2402.13196}

\bibitem[{Ren et~al(2024)Ren, Xia, Zhang, Guan, and Zhou}]{ren2024learning}
Ren Y, Xia Y, Zhang H, Guan J, Zhou S (2024) Learning adaptive kernels for statistical independence tests. In: International Conference on Artificial Intelligence and Statistics, PMLR, pp 2494--2502, \urlprefix\url{https://proceedings.mlr.press/v238/ren24a.html}

\bibitem[{Runge et~al(2019{\natexlab{a}})Runge, Bathiany, Bollt, Camps-Valls, Coumou, Deyle, Glymour, Kretschmer, Mahecha, Mu{\~n}oz-Mar{\'\i} et~al}]{runge2019inferring}
Runge J, Bathiany S, Bollt E, Camps-Valls G, Coumou D, Deyle E, Glymour C, Kretschmer M, Mahecha MD, Mu{\~n}oz-Mar{\'\i} J, et~al (2019{\natexlab{a}}) Inferring causation from time series in earth system sciences. Nature Communications 10(1):2553, \urlprefix\url{https://doi.org/10.1038/s43017-023-00431-y}

\bibitem[{Runge et~al(2019{\natexlab{b}})Runge, Nowack, Kretschmer, Flaxman, and Sejdinovic}]{runge2019detecting}
Runge J, Nowack P, Kretschmer M, Flaxman S, Sejdinovic D (2019{\natexlab{b}}) Detecting and quantifying causal associations in large nonlinear time series datasets. Science advances 5(11):eaau4996, \urlprefix\url{https://doi.org/10.1126/sciadv.aau4996}

\bibitem[{Sachs and et~al(2005)}]{sachsurl}
Sachs K, et~al (2005) Causal protein-signaling networks derived from multiparameter single-cell data. \url{https://www.science.org/doi/suppl/10.1126/science.1105809/suppl\_file/sachs.som.datasets.zip}

\bibitem[{Sachs et~al(2005)Sachs, Perez, Pe'er, Lauffenburger, and Nolan}]{sachs2005causal}
Sachs K, Perez O, Pe'er D, Lauffenburger DA, Nolan GP (2005) Causal protein-signaling networks derived from multiparameter single-cell data. Science 308(5721):523--529, \urlprefix\url{https://doi.org/10.1126/science.1105809}

\bibitem[{Schuld(2021)}]{schuld2021supervised}
Schuld M (2021) Supervised quantum machine learning models are kernel methods. arXiv:210111020 \urlprefix\url{https://doi.org/10.48550/arXiv.2101.11020}

\bibitem[{Shaydulin and Wild(2022)}]{shaydulin2022importance}
Shaydulin R, Wild SM (2022) Importance of kernel bandwidth in quantum machine learning. Physical Review A 106(4):042407, \urlprefix\url{https://doi.org/10.1103/PhysRevA.106.042407}

\bibitem[{Shimizu et~al(2006)Shimizu, Hoyer, Hyv\"{a}rinen, and Kerminen}]{10.5555/1248547.1248619}
Shimizu S, Hoyer PO, Hyv\"{a}rinen A, Kerminen A (2006) A linear non-gaussian acyclic model for causal discovery. Journal of Machine Learning Research 7:2003–2030, \urlprefix\url{https://jmlr.org/papers/volume7/shimizu06a/shimizu06a.pdf}

\bibitem[{Shimizu et~al(2011)Shimizu, Inazumi, Sogawa, Hyv\"{a}rinen, Kawahara, Washio, Hoyer, and Bollen}]{10.5555/1953048.2021040}
Shimizu S, Inazumi T, Sogawa Y, Hyv\"{a}rinen A, Kawahara Y, Washio T, Hoyer PO, Bollen K (2011) Directlingam: A direct method for learning a linear non-gaussian structural equation model. Journal of Machine Learning Research 12(null):1225–1248, \urlprefix\url{https://www.jmlr.org/papers/volume12/shimizu11a/shimizu11a.pdf}

\bibitem[{Sim et~al(2019)Sim, Johnson, and Aspuru-Guzik}]{sim2019expressibility}
Sim S, Johnson PD, Aspuru-Guzik A (2019) Expressibility and entangling capability of parameterized quantum circuits for hybrid quantum-classical algorithms. Advanced Quantum Technologies 2(12):1900070, \urlprefix\url{https://doi.org/10.1002/qute.201900070}

\bibitem[{Spirtes and Glymour(1991)}]{spirtes1991algorithm}
Spirtes P, Glymour C (1991) An algorithm for fast recovery of sparse causal graphs. Social science computer review 9(1):62--72, \urlprefix\url{https://doi.org/10.1177/089443939100900106}

\bibitem[{Spirtes et~al(1995)Spirtes, Meek, and Richardson}]{10.5555/2074158.2074215}
Spirtes P, Meek C, Richardson T (1995) Causal inference in the presence of latent variables and selection bias. In: Proceedings of the Eleventh Conference on Uncertainty in Artificial Intelligence, Morgan Kaufmann Publishers Inc., San Francisco, CA, USA, UAI'95, p 499–506

\bibitem[{Spirtes et~al(2001)Spirtes, Glymour, and Scheines}]{spirtes2001causation}
Spirtes P, Glymour C, Scheines R (2001) Causation, prediction, and search. MIT press

\bibitem[{Spirtes et~al(2013)Spirtes, Meek, and Richardson}]{spirtes2013causal}
Spirtes PL, Meek C, Richardson TS (2013) Causal inference in the presence of latent variables and selection bias. arXiv preprint arXiv:13024983 \urlprefix\url{https://doi.org/10.48550/arXiv.1302.4983}

\bibitem[{Strobl et~al(2019)Strobl, Zhang, and Visweswaran}]{strobl2019approximate}
Strobl EV, Zhang K, Visweswaran S (2019) Approximate kernel-based conditional independence tests for fast non-parametric causal discovery. Journal of Causal Inference 7(1):20180017, \urlprefix\url{https://doi.org/10.1515/jci-2018-0017}

\bibitem[{Suzuki et~al(2021)Suzuki, Kawase, Masumura, Hiraga, Nakadai, Chen, Nakanishi, Mitarai, Imai, Tamiya, Yamamoto, Yan, Kawakubo, Nakagawa, Ibe, Zhang, Yamashita, Yoshimura, Hayashi, and Fujii}]{suzukiQulacsFastVersatile2021}
Suzuki Y, Kawase Y, Masumura Y, Hiraga Y, Nakadai M, Chen J, Nakanishi KM, Mitarai K, Imai R, Tamiya S, Yamamoto T, Yan T, Kawakubo T, Nakagawa YO, Ibe Y, Zhang Y, Yamashita H, Yoshimura H, Hayashi A, Fujii K (2021) Qulacs: A fast and versatile quantum circuit simulator for research purpose. Quantum 5:559, \doi{10.22331/q-2021-10-06-559}, \eprint{2011.13524}

\bibitem[{Thanasilp et~al(2024)Thanasilp, Wang, Cerezo, and Holmes}]{thanasilp2024exponential}
Thanasilp S, Wang S, Cerezo M, Holmes Z (2024) Exponential concentration in quantum kernel methods. Nature Communications 15(1):5200, \urlprefix\url{https://doi.org/10.1038/s41467-024-49287-w}

\bibitem[{Vedaie et~al(2020)Vedaie, Noori, Oberoi, Sanders, and Zahedinejad}]{vedaie2020quantum}
Vedaie SS, Noori M, Oberoi JS, Sanders BC, Zahedinejad E (2020) Quantum multiple kernel learning. arXiv \urlprefix\url{https://doi.org/10.48550/arXiv.2011.09694}

\bibitem[{Virtanen et~al(2020)Virtanen, Gommers, Oliphant, Haberland, Reddy, Cournapeau, Burovski, Peterson, Weckesser, Bright, {van der Walt}, Brett, Wilson, Millman, Mayorov, Nelson, Jones, Kern, Larson, Carey, Polat, Feng, Moore, {VanderPlas}, Laxalde, Perktold, Cimrman, Henriksen, Quintero, Harris, Archibald, Ribeiro, Pedregosa, {van Mulbregt}, and {SciPy 1.0 Contributors}}]{2020SciPy-NMeth}
Virtanen P, Gommers R, Oliphant TE, Haberland M, Reddy T, Cournapeau D, Burovski E, Peterson P, Weckesser W, Bright J, {van der Walt} SJ, Brett M, Wilson J, Millman KJ, Mayorov N, Nelson ARJ, Jones E, Kern R, Larson E, Carey CJ, Polat I, Feng Y, Moore EW, {VanderPlas} J, Laxalde D, Perktold J, Cimrman R, Henriksen I, Quintero EA, Harris CR, Archibald AM, Ribeiro AH, Pedregosa F, {van Mulbregt} P, {SciPy 10 Contributors} (2020) {{SciPy} 1.0: Fundamental Algorithms for Scientific Computing in Python}. Nature Methods 17:261--272, \doi{10.1038/s41592-019-0686-2}

\bibitem[{Vowels et~al(2022)Vowels, Camgoz, and Bowden}]{vowels2022d}
Vowels MJ, Camgoz NC, Bowden R (2022) D’ya like dags? a survey on structure learning and causal discovery. ACM Computing Surveys 55(4):1--36, \urlprefix\url{https://doi.org/10.1145/3527154}

\bibitem[{Wang et~al(2024)Wang, Huang, Liu, You, Liu, Zhang, and Gong}]{wang2024optimal}
Wang W, Huang B, Liu F, You X, Liu T, Zhang K, Gong M (2024) Optimal kernel choice for score function-based causal discovery. arXiv \urlprefix\url{https://doi.org/10.48550/arXiv.2407.10132}

\bibitem[{Xu et~al(2024)Xu, Liu, and Sutherland}]{xu2024learning}
Xu N, Liu F, Sutherland DJ (2024) Learning deep kernels for non-parametric independence testing. arXiv \urlprefix\url{https://doi.org/10.48550/arXiv.2409.06890}

\bibitem[{Zhang and Hyvarinen(2012)}]{zhang2012identifiability}
Zhang K, Hyvarinen A (2012) On the identifiability of the post-nonlinear causal model. arXiv preprint arXiv:12052599

\bibitem[{Zhang et~al(2011)Zhang, Peters, Janzing, and Sch\"{o}lkopf}]{10.5555/3020548.3020641}
Zhang K, Peters J, Janzing D, Sch\"{o}lkopf B (2011) Kernel-based conditional independence test and application in causal discovery. In: Proceedings of the Twenty-Seventh Conference on Uncertainty in Artificial Intelligence, AUAI Press, Arlington, Virginia, USA, UAI'11, p 804–813

\bibitem[{Zhang et~al(2012)Zhang, Peters, Janzing, and Schoelkopf}]{zhang2012kernelbasedconditionalindependencetest}
Zhang K, Peters J, Janzing D, Schoelkopf B (2012) Kernel-based conditional independence test and application in causal discovery. \urlprefix\url{https://arxiv.org/abs/1202.3775}, \eprint{1202.3775}

\bibitem[{Zheng et~al(2018)Zheng, Aragam, Ravikumar, and Xing}]{zheng2018dags}
Zheng X, Aragam B, Ravikumar PK, Xing EP (2018) Dags with no tears: Continuous optimization for structure learning. Advances in neural information processing systems 31, \urlprefix\url{https://proceedings.neurips.cc/paper_files/paper/2018/file/e347c51419ffb23ca3fd5050202f9c3d-Paper.pdf}

\bibitem[{Zheng et~al(2024)Zheng, Huang, Chen, Ramsey, Gong, Cai, Shimizu, Spirtes, and Zhang}]{zheng2024causal}
Zheng Y, Huang B, Chen W, Ramsey J, Gong M, Cai R, Shimizu S, Spirtes P, Zhang K (2024) Causal-learn: Causal discovery in python. Journal of Machine Learning Research 25(60):1--8, \urlprefix\url{https://www.jmlr.org/papers/v25/23-0970.html}

\end{thebibliography}

\newpage
\appendix

\section*{Appendices}
\section{PC algorithm}\label{app:pc}

Here, we summarize the PC algorithm (\cite{spirtes1991algorithm,spirtes2001causation}) and highlight our contribution by emphasizing the difference between the qPC and conventional PC algorithms.
Historically, the PC algorithm (\cite{spirtes1991algorithm}) was introduced as a computationally efficient version of the Spirtes--Glymour--Scheines algorithm and has been widely used due to its efficiency and effectiveness, as it can perform several tests that grow exponentially with the number of variables.
The PC algorithm includes a (conditional) independence test and orientation of the edges to provide the CPDAGs from observed data under the assumptions of causal faithfulness and causal sufficiency.
A CPDAG with directed and undirected edges describes an equivalence class of DAGs and a set of DAGs with the same skeleton and collider structures.
This equivalence class is referred to as a Markov equivalence class.
The causal faithfulness condition states that if two variables are statistically independent, there should be no direct causal path between them in the causal model.
Causal sufficiency assumes that there are no unobserved variables.
The PC algorithm assumes acyclicity in the causal graphs.
We also assume that the observed data are collected independently and are identically distributed.
In contrast to causal model-based algorithms and gradient-based algorithms using statistical models, such as LiNGAM (\cite{10.5555/1248547.1248619}) and NOTEARS (\cite{zheng2018dags}), the PC algorithm does not require any specific functional assumptions on causal relations.
Additionally, the PC algorithm employs statistical tests but does not assume their specific types. Thus, it is applicable to discrete and continuous variables, with suitable tests.
We describe the PC algorithm procedure for obtaining CPDAGs below.

The PC algorithm begins with a complete undirected graph and proceeds through three steps to obtain the CPDAG.
As the first part of the PC algorithm, the skeleton, i.e., the undirected graph corresponding to the CPDAG, was inferred through statistical tests.
In this step, we select two variables from the set of all variables, $X$ and $Y$.
Thereafter, for $X$ and $Y$, we perform an independence test to investigate whether $X \independent Y$.
If the two variables are independent, we remove the edge between them.
For $X$ and $Y$ with a still existing edge and another variable $Z_1$, we perform the conditional independent test to investigate whether $X \independent Y | Z_1$.
For $X$ and $Y$ with a still existing edge and a set of other variables such as $Z_1$ and $Z_2$, we perform the conditional independence test to investigate whether $X \independent Y | Z_1, Z_2$.
The above process continues until the number of other variables $Z_1,Z_2,\cdots$ equals the total number adjacent to $X$ or $Y$.
This process was performed for each ordered pair of variables.
In the second part, one seeks v-structures and orients them as colliders.
In the obtained skeleton graph, if there are edges between $X$ and $Z$ as well as $Y$ and $Z$ but no edge exists between $X$ and $Y$, such as $X-Z-Y$, we investigate whether $X \notindependent Y | Z$.
If this holds true, we call this triplet a v-structure and orient it as a collider, where $X\to Z\gets Y$.
Finally, the remaining parts of the graph were oriented using orientation propagation.
If we find structures such as $X\to Z-Y$, we orient them as $X\to Z\to Y$, given that a v-structure $X\to Z\gets Y$ contradicts $X \independent Y | Z$, as confirmed in the first part.
If we find a structure $X-Y$ with a directed path from $X$ to $Y$, we orient it as $X\to Y.$

Although the PC algorithm is generally applicable, it has inherent limitations associated with its underlying assumptions.
One of the most significant limitations of this study is the presence of confounding factors.
In most real-world problems, the effects of hidden variables cannot be avoided, which breaks the assumptions of the PC algorithm and can thus produce unreliable results.
The FCI algorithm (\cite{10.5555/2074158.2074215}) is a variant of the PC algorithm, and applies to cases with confounders.
In contrast to the PC algorithm, the FCI algorithm determines the directions of arrows when they can be an arrow or a tail.
Consequently, the FCI algorithm yields partial ancestral graphs, which may include not only directed and undirected edges but also bidirected edges representing latent confounders.
Although the FCI algorithm incurs a computational cost, it can be applied in broader situations.
Another problem can arise from assuming static data properties.
The real data we analyze often has temporal structures, which we refer to as time-series data.
In such cases, the PC algorithm can be applied by expanding the causal graphs in the temporal direction.
In both cases, the qPC algorithm can be applied with modifications to the PC algorithm.

\begin{algorithm}[h]
\caption{PC algorithm} 
\begin{algorithmic}[1]
\Procedure{PC Algorithm}{$Data, \alpha, Param $}

\State $V \gets$ set of all variables in $Data$
\State $G \gets$ Complete undirected graph on node set $V$
\State $Kernel \gets$ set of all kernel parameters in $Param$ 
\State // 1. Unconditional Independence Test
\ForAll{pairs of variables $X, Y$ in $V$}
    \If{$IndepTest(X, Y) > \alpha$} \Comment{Kernel-based unconditional independence test}
        \State Remove edge $X - Y$ from $G$
        \State $Sepset(X,Y) \gets \emptyset$
    \EndIf
\EndFor
\State $n \gets 1$ \Comment{Conditioning set size}

\State // 2. Conditional Independence Test
\While{$\exists$ adjacent vertices $X, Y$ with $|adj(G,X)\setminus\{Y\}| \geq n$}
    \ForAll{adjacent vertices $X, Y$ in $G$}
        \ForAll{$S \subseteq adj(G,X)\setminus\{Y\}$ with $|S| = n$}
            \If{$IndepTest(X, Y | S) > \alpha$} \Comment{Kernel-based conditional independence test}
                \State Remove edge $X - Y$ from $G$
                \State $Sepset(X,Y) \gets S$
                \State \textbf{break}
            \EndIf
        \EndFor
    \EndFor
    \State $n \gets n + 1$
\EndWhile

\State // 3. Orient the edges in the Graph $G$
\ForAll{subgraph $X - Z - Y$ in $G$, where $X$ and $Y$ are not adjacent}
    \If{$Z \notin Sepset(X,Y)$}
        \State Orient $X - Z - Y$ as $X \to Z \leftarrow Y$.
    \EndIf
\EndFor

\ForAll{subgraph $X \rightarrow Z - Y$ in $G$, where $X$ and $Y$ are not adjacent}
    \State Orient $Z - Y$ as $Z \rightarrow Y$.
\EndFor

\ForAll{subgraph $X - Y$ in $G$ with a directed path from $X$ to $Y$}
    \State Orient $X - Y$ as $X \rightarrow Y$.
\EndFor

\State \Return $G$ \Comment{Partially directed acyclic graph}

\EndProcedure
\end{algorithmic}
\end{algorithm}

\begin{center}\
\begin{figure*}[t!]
    \centering
    \includegraphics[width=0.85\textwidth]{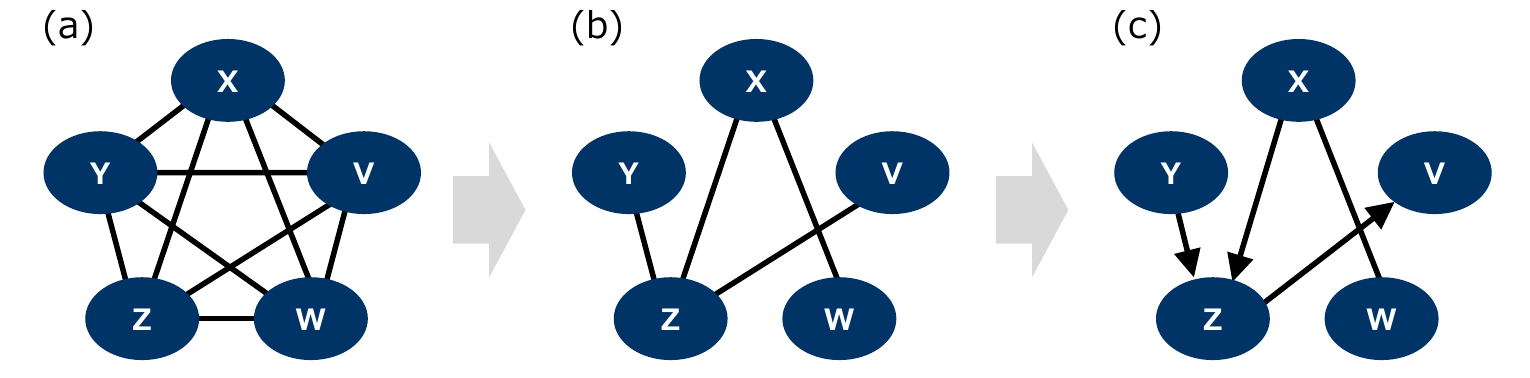}
    \caption{Schematic of the process of the PC algorithm. It begins with the complete graph, as shown in (a). (Conditional) Independence tests are executed to remove edges among them as in (b). Orientation rule gives the arrows their orientations if the conditions are satisfied, as in (c).}
    \label{figure:pc}
\end{figure*}
\end{center}\

\begin{figure*}[t!]
    \centering
    \includegraphics[keepaspectratio, scale=0.6]{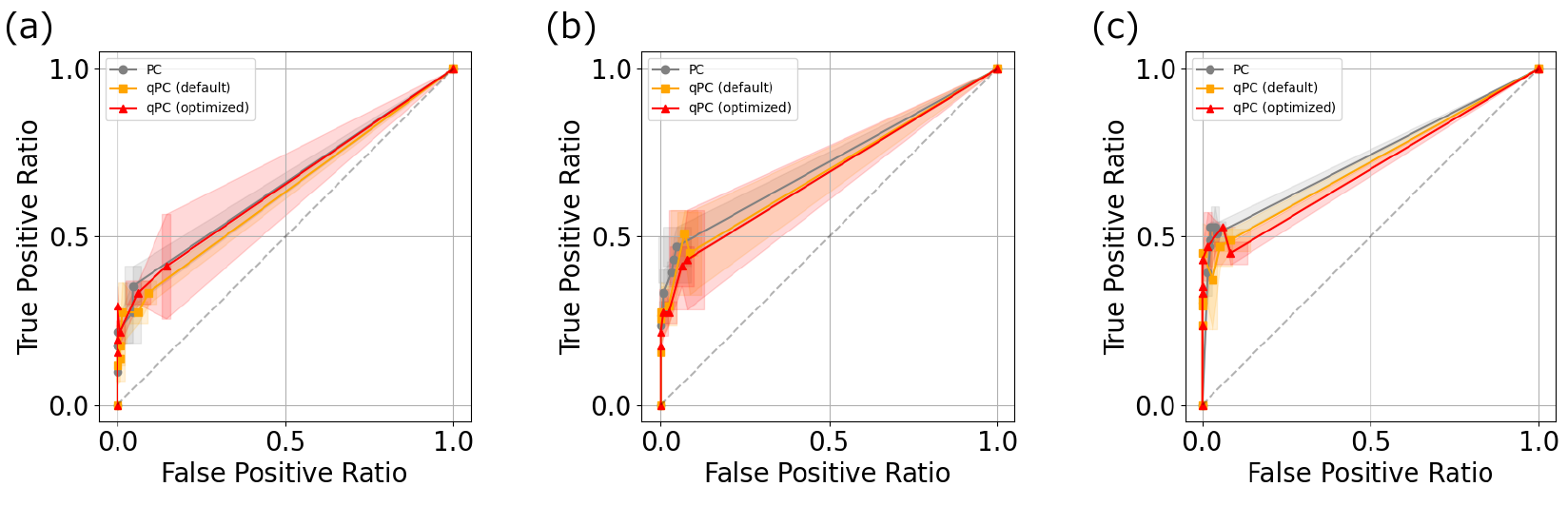}
    \caption{Application to gene expression data with the gold standard network.
    ROC curves for the PC and qPC algorithms for different sample sizes.
    (a) $N=30.$ (b) $N=80.$ (c) $N=400.$
  }
    \label{figure:sachs}
\end{figure*}

\section{Review of the kernel-based conditional independence test}\label{app:kcit}

This section provides a brief review of the KCIT (\cite{10.5555/3020548.3020641, zhang2012kernelbasedconditionalindependencetest}).
Let us begin with given continuous random variables $X, Y$, and $Z$ with domains $\mathcal{X}, \mathcal{Y}$, and $\mathcal{Z}$, respectively. 
The probability law for $X$ is denoted by $P_X$.
We introduce a measurable, positive definite kernel $k_{\mathcal{X}}$ on $\mathcal{X}$ and denote the corresponding RKHS as $\mathcal{H}_{\mathcal{X}}$.
The space of the square integrable functions of $X$ is denoted by $L^2_X$.
$\mathbf{K}_X$ is then the kernel matrix of the {\it i.i.d.} sample $\mathbf{x} = \{ x_1, ..., x_n \}$ of $X$, and $\widetilde{\mathbf{K}}_X = \mathbf{H} \mathbf{K}_{\mathbf{x}} \mathbf{H}$ is the centralized kernel, where $\mathbf{H} := \mathbf{I} - \frac{1}{n} \mathbf{1} \mathbf{1}^T$ with $\mathbf{I}$ and $\mathbf{1}$ being the $n \times n$ identity matrix and the vector of 1's, respectively.
Similarly, we define $P_Y, P_Z, k_{\mathcal{Y}, k_{\mathcal{Z}}}, \mathcal{H}_{\mathcal{Y}}, \mathcal{H}_{\mathcal{Z}},\allowbreak\ L^2_Y,\allowbreak\ L^2_Z,\allowbreak\ \mathbf{K}_Y,\allowbreak\ \mathbf{K}_Z, \widetilde{\mathbf{K}}_Y, \widetilde{\mathbf{K}}_Z$ as well.

The problem here is to perform the test for conditional independence (CI), {\it i.e.}, test the null hypothesis $X \perp \!\!\!\! \perp Y \ | \ Z$, between $X$ and $Y$ given $Z$ from their {\it i.i.d.} samples.
In Refs. (\cite{10.5555/3020548.3020641, zhang2012kernelbasedconditionalindependencetest}), a CI test was developed by defining a simple statistic based on two characterizations of the CI~(\cite{10.5555/2981562.2981624, 10.1093/biomet/67.3.581}) and deriving its asymptotic distribution under the null hypothesis. 

One characterization of the CI is provided in terms of the cross-covariance operator $\Sigma_{XY}$ in the RKHS~(\cite{10.5555/2981562.2981624}).
For random vector $(X, Y)$ on $\mathcal{X} \times \mathcal{Y}$, cross-covariance operator $\Sigma_{XY}$ is defined by the following relation:
\begin{eqnarray}
    \langle f, \Sigma_{XY}f \rangle = \mathbb{E}_{XY} \left[ f(X) g(Y) \right] - \mathbb{E}_X \left[ f(X) \right] \mathbb{E}_Y \left[ g(Y) \right] 
\end{eqnarray}
for all $f \in \mathcal{H}_{\mathcal{X}}$ and $g \in \mathcal{H}_{\mathcal{Y}}$.
\begin{lemma}[Theorem 3 (ii) of Ref. (\cite{10.5555/2981562.2981624})] \label{Fukumizu_2007}
    Denote $\ddot{X} = (X, Z)$ and $k_{\ddot{\mathcal{X}}} = k_{\mathcal{X}} k_{\mathcal{Z}}$.
    Assume that $\mathcal{H}_{\mathcal{X}} \subset L^2_X, \mathcal{H}_{\mathcal{Y}} \subset L^2_Y$, and $\mathcal{H}_{\mathcal{Z}} \subset L^2_Z$.
    Furthermore, assume that $k_{\ddot{\mathcal{X}}} k_{\mathcal{Y}}$ is a characteristic kernel on $(\mathcal{X} \times \mathcal{Z}) \times \mathcal{Y}$ and  $\mathcal{H}_{\mathcal{Z}} + \mathbb{R}$ is dense in $L^2(P_Z)$.
    Then, 
    \begin{eqnarray}
        \Sigma_{\ddot{X}Y | Z} = 0 \Leftrightarrow X \perp \!\!\!\! \perp Y \ | \ Z.
    \end{eqnarray}
\end{lemma}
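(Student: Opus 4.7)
The plan is to prove the two implications separately, relying on the explicit representation
\begin{equation*}
\langle f, \Sigma_{\ddot{X}Y|Z}\, g \rangle = \mathbb{E}\bigl[(f(\ddot{X}) - \mathbb{E}[f(\ddot{X})|Z])(g(Y) - \mathbb{E}[g(Y)|Z])\bigr]
\end{equation*}
for $f \in \mathcal{H}_{\ddot{\mathcal{X}}}$ and $g \in \mathcal{H}_{\mathcal{Y}}$, which follows from the standard decomposition $\Sigma_{\ddot{X}Y|Z} = \Sigma_{\ddot{X}Y} - \Sigma_{\ddot{X}Z}\Sigma_{ZZ}^{-1}\Sigma_{ZY}$ together with the density of $\mathcal{H}_{\mathcal{Z}} + \mathbb{R}$ in $L^2(P_Z)$. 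This density assumption is precisely what identifies the ``regression'' $\Sigma_{ZZ}^{-1}\Sigma_{Zf}$ with the true $L^2$ conditional expectation $\mathbb{E}[f(\ddot{X})|Z]$; without it the regression would only recover the projection onto the closure of $\mathcal{H}_{\mathcal{Z}} + \mathbb{R}$.

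For the easy direction $(\Leftarrow)$, if $X \perp \!\!\!\! \perp Y \,|\, Z$, then since $\ddot{X} = (X,Z)$ is a measurable function of $(X,Z)$, conditional independence propagates to $\ddot{X} \perp \!\!\!\! \perp Y \,|\, Z$. The conditional expectations then factor as $\mathbb{E}[f(\ddot{X})g(Y)|Z] = \mathbb{E}[f(\ddot{X})|Z]\,\mathbb{E}[g(Y)|Z]$, and taking the outer expectation in the representation formula yields $\Sigma_{\ddot{X}Y|Z} = 0$.

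For the hard direction $(\Rightarrow)$, assume $\Sigma_{\ddot{X}Y|Z} = 0$. The representation above gives, for every $f \in \mathcal{H}_{\ddot{\mathcal{X}}}$ and $g \in \mathcal{H}_{\mathcal{Y}}$, the identity
\begin{equation*}
\mathbb{E}_{P}[f(\ddot{X})g(Y)] = \mathbb{E}_{P_Z}\bigl[\mathbb{E}[f(\ddot{X})|Z]\,\mathbb{E}[g(Y)|Z]\bigr].
\end{equation*}
I would then introduce the auxiliary measure $Q$ on $(\mathcal{X}\times\mathcal{Z})\times\mathcal{Y}$ defined by
\begin{equation*}
Q(A\times B) = \int P_{\ddot{X}|Z=z}(A)\, P_{Y|Z=z}(B)\, dP_Z(z),
\end{equation*}
which preserves the marginals $P_{\ddot{X}}$ and $P_Y$ and is by construction the unique measure with these marginals making $\ddot{X} \perp \!\!\!\! \perp Y \,|\, Z$. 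The identity above says exactly that the mean embeddings of $P$ and $Q$ into the product RKHS with kernel $k_{\ddot{\mathcal{X}}} k_{\mathcal{Y}}$ agree on every tensor product $f \otimes g$; linearity and the density of tensors in the product RKHS then extend this to equality of the two mean embeddings. The characteristic property of $k_{\ddot{\mathcal{X}}} k_{\mathcal{Y}}$ now forces $P = Q$, i.e., $\ddot{X} \perp \!\!\!\! \perp Y \,|\, Z$, and since $X$ is a component of $\ddot{X}$ this immediately gives $X \perp \!\!\!\! \perp Y \,|\, Z$.

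The main obstacle will be the careful justification of the representation formula in the first paragraph when $\Sigma_{ZZ}$ is not boundedly invertible: one must interpret $\Sigma_{ZZ}^{-1}$ as a (closable) pseudo-inverse and argue, using the density of $\mathcal{H}_{\mathcal{Z}} + \mathbb{R}$ in $L^2(P_Z)$, that it still recovers the true conditional expectation rather than a strict sub-projection. A secondary but technical point is to check that the tensor products $f \otimes g$ with $f \in \mathcal{H}_{\ddot{\mathcal{X}}}$ and $g \in \mathcal{H}_{\mathcal{Y}}$ are dense in the product RKHS so that agreement on tensors lifts to agreement of the mean embeddings; once this is in hand, the characteristic assumption turns the whole argument into a standard injectivity-of-embedding step.
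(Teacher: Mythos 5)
The paper does not actually prove this lemma: it is imported verbatim as Theorem~3(ii) of Fukumizu et al.\ and used as a black box, so there is no in-paper argument to compare against. Your sketch reconstructs the standard proof from that reference, and in outline it is correct: the easy direction follows from the representation $\langle f,\Sigma_{\ddot{X}Y|Z}\,g\rangle=\mathbb{E}_Z\bigl[\mathrm{Cov}\bigl(f(\ddot{X}),g(Y)\mid Z\bigr)\bigr]$ together with the observation that $X\perp\!\!\!\!\perp Y\mid Z$ is equivalent to $\ddot{X}\perp\!\!\!\!\perp Y\mid Z$; the hard direction correctly compares the mean embedding of $P_{\ddot{X}Y}$ with that of $Q=\int P_{\ddot{X}|Z=z}\otimes P_{Y|Z=z}\,dP_Z(z)$ and invokes the characteristic property of $k_{\ddot{\mathcal{X}}}k_{\mathcal{Y}}$ after extending agreement on elementary tensors $f\otimes g$ to the full tensor-product RKHS. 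Two caveats keep this a plan rather than a proof. First, essentially all of the analytic content of the theorem sits in the representation formula you take as a starting point: one must show that the (generally unbounded, densely defined) pseudo-inverse $\Sigma_{ZZ}^{-1}$ composed with the cross-covariance recovers the true $L^2$ conditional expectation, and this is exactly where the density of $\mathcal{H}_{\mathcal{Z}}+\mathbb{R}$ in $L^2(P_Z)$ is consumed; you correctly flag this but do not carry it out. Second, your description of $Q$ as ``the unique measure with these marginals making $\ddot{X}\perp\!\!\!\!\perp Y\mid Z$'' is loose---$Q$ is pinned down by the conditional laws given $Z$, not by the marginals $P_{\ddot{X}}$ and $P_Y$ alone---but nothing downstream depends on that phrasing, since all you need is the equivalence $P=Q\Leftrightarrow X\perp\!\!\!\!\perp Y\mid Z$, which holds.
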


The other characterization of CI is given by explicitly enforcing the uncorrelatedness of functions in
suitable spaces.
\begin{lemma}[{(\cite{10.1093/biomet/67.3.581})}]\label{Daudin_1980} 
    The following conditions are equivalent to each other:
    \begin{eqnarray}
        X \perp \!\!\!\! \perp Y \ | \ Z \Leftrightarrow \mathbb{E}\left[ f' g' \right] = 0, \forall f' \in \mathcal{E}_{XZ}, \forall g' \in \mathcal{E}'_{XZ}, \nonumber \\
    \end{eqnarray}
    where 
    \begin{eqnarray}
        \mathcal{E}_{XZ} &:=& \left\{ f' \in L^2_{\ddot{X}} \ | \ \mathbb{E}\left[ f' | Z \right] = 0 \right\}, \\
        \mathcal{E}'_{YZ} &:=& \left\{ g' \ | \ g' = g(Y) - \mathbb{E}\left[ g | Z \right], g \in L^2_Y \right\}.
    \end{eqnarray}
\end{lemma}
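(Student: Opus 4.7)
The plan is to prove the two directions separately, with the backward direction requiring more care. For the forward implication, assume $X \perp \!\!\!\! \perp Y \mid Z$ and pick arbitrary $f' \in \mathcal{E}_{XZ}$ and $g' \in \mathcal{E}'_{YZ}$. The conditional independence assumption lets me factor the conditional expectation, $\mathbb{E}[f'(X,Z)\, g'(Y,Z) \mid Z] = \mathbb{E}[f' \mid Z]\, \mathbb{E}[g' \mid Z]$ almost surely. Because $f' \in \mathcal{E}_{XZ}$ means $\mathbb{E}[f' \mid Z] = 0$ (and, redundantly, $\mathbb{E}[g' \mid Z] = 0$ by construction of $\mathcal{E}'_{YZ}$), the right-hand side vanishes, and so does $\mathbb{E}[f' g'] = \mathbb{E}[\mathbb{E}[f' g' \mid Z]]$ by the tower property.

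For the converse, the key step is to upgrade the scalar hypothesis $\mathbb{E}[f'g']=0$ to a pointwise conditional identity. I would exploit the fact that $\mathcal{E}_{XZ}$ is closed under multiplication by bounded Borel functions of $Z$: for any such $h$, the product $h(Z) f'(X,Z)$ still has conditional mean zero given $Z$ (since conditioning pulls the $h(Z)$ factor out), so it belongs to $\mathcal{E}_{XZ}$. Inserting this enlarged family into the hypothesis yields $\mathbb{E}[h(Z)\, \mathbb{E}[f' g' \mid Z]] = 0$ for every bounded measurable $h$, and a standard monotone-class argument then forces $\mathbb{E}[f' g' \mid Z] = 0$ almost surely for every fixed pair $(f', g')$. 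To conclude, take arbitrary $f \in L^2_{\ddot{X}}$ and $g \in L^2_Y$ and form their conditionally centered versions $\tilde f = f - \mathbb{E}[f \mid Z] \in \mathcal{E}_{XZ}$ and $\tilde g = g - \mathbb{E}[g \mid Z] \in \mathcal{E}'_{YZ}$. Expanding the conditional identity $\mathbb{E}[\tilde f \tilde g \mid Z] = 0$ gives $\mathbb{E}[f g \mid Z] = \mathbb{E}[f \mid Z]\, \mathbb{E}[g \mid Z]$ almost surely; specializing to $f$ that depends only on $X$ via the inclusion $L^2_X \subset L^2_{\ddot{X}}$ and letting $f, g$ range over a rich enough class (for instance bounded continuous functions, which are dense in $L^2$) delivers the factorization characterizing $X \perp \!\!\!\! \perp Y \mid Z$.

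The main obstacle is the measure-theoretic passage from the integrated to the pointwise conditional identity in the backward direction: one needs the monotone class theorem, or equivalently density of bounded $\sigma(Z)$-measurable functions in $L^\infty(\sigma(Z))$, together with mild standard-Borel regularity on $\mathcal{X}, \mathcal{Y}, \mathcal{Z}$ to justify exchanging expectation and conditioning. Once this hypothesis-strengthening step is established, the remainder of the argument reduces to linear algebra in $L^2$, namely projecting onto the subspace of $Z$-measurable functions and invoking the orthogonal decomposition $L^2 = L^2(\sigma(Z)) \oplus L^2(\sigma(Z))^{\perp}$, whose perpendicular complement is exactly the space of conditionally centered functions appearing in the definitions of $\mathcal{E}_{XZ}$ and $\mathcal{E}'_{YZ}$.
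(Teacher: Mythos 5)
Your proposal is correct, but there is nothing in the paper to compare it against: the paper does not prove this lemma at all, it simply imports it from Daudin (1980) as a known characterization of conditional independence and uses it only as a stepping stone toward the KCIT statistics. Judged on its own, your argument is sound and is essentially the classical one. The forward direction is exactly right: conditional independence gives $\mathbb{E}[f'g'\mid Z]=\mathbb{E}[f'\mid Z]\,\mathbb{E}[g'\mid Z]=0$ and the tower property finishes it. For the converse, the key move — observing that $\mathcal{E}_{XZ}$ is stable under multiplication by bounded $\sigma(Z)$-measurable functions, so the integrated orthogonality $\mathbb{E}[h(Z)f'g']=0$ for all bounded $h$ forces $\mathbb{E}[f'g'\mid Z]=0$ a.s. — is the right one, and then centering arbitrary $f\in L^2_{\ddot{X}}$, $g\in L^2_Y$ yields $\mathbb{E}[fg\mid Z]=\mathbb{E}[f\mid Z]\,\mathbb{E}[g\mid Z]$; specializing to indicators $f=\mathbf{1}_A(X)$, $g=\mathbf{1}_B(Y)$ recovers the defining factorization of $X \perp\!\!\!\!\perp Y \mid Z$ (the null set may depend on the pair $(A,B)$, which is all the standard definition requires, so no further uniformity argument is needed). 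Two cosmetic remarks: the quantifier in the displayed equivalence should read $g'\in\mathcal{E}'_{YZ}$, not $\mathcal{E}'_{XZ}$ — a typo in the statement that you implicitly and correctly repaired — and your appeal to a monotone-class argument is heavier than necessary, since taking $h=\operatorname{sign}\bigl(\mathbb{E}[f'g'\mid Z]\bigr)$ already gives the pointwise conclusion directly.
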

These functions are constructed from the corresponding $L^2$ spaces. 
For instance, for arbitrary $f \in L^2_{XZ}$, function $f'$ is given by
\begin{eqnarray}
    f'(\ddot{X}) = f(\ddot{X}) - \mathbb{E}\left[ f | Z \right] = f(\ddot{X}) - h_f^{\ast}(Z),
    \label{eq:reg_fun}
\end{eqnarray}
where $h_f^{\ast} \in L^2_Z$ denotes regression function $f(\ddot{X})$ on $Z$. 

Refs.~(\cite{10.5555/3020548.3020641, zhang2012kernelbasedconditionalindependencetest}) established that if functions $f$ and $g$ are restricted to 
spaces $\mathcal{H}_{\ddot{X}}$ and $\mathcal{H}_Y$, respectively, then Lemma~\ref{Daudin_1980} is reduced to Lemma~\ref{Fukumizu_2007}. 
Specifically, they used kernel ridge regression to estimate the regression function $h_f^{\ast}$ in Eq.~\eqref{eq:reg_fun}; that is, 
\begin{eqnarray}
    \hat{h}_f^{\ast} (\mathbf{z}) = \widetilde{\mathbf{K}}_Z (\widetilde{\mathbf{K}}_Z + \epsilon \mathbf{I})^{-1} \cdot f(\ddot{\mathbf{x}}), 
    \label{eq:k_ridge}
\end{eqnarray}
where $\epsilon$ denotes a small positive regularization parameter.
From Eq.~\eqref{eq:k_ridge}, we can construct a centralized kernel
matrix corresponding to function $f'(\ddot{X})$,
\begin{eqnarray}
    \widetilde{\mathbf{K}}_{\ddot{X}|Z} = \mathbf{R}_Z \widetilde{\mathbf{K}}_{\ddot{X}} \mathbf{R}_Z, 
\end{eqnarray}
where $\mathbf{R}_Z = \mathbf{I} - \widetilde{\mathbf{K}}_Z (\widetilde{\mathbf{K}}_Z + \epsilon \mathbf{I})^{-1} = \epsilon (\widetilde{\mathbf{K}}_Z + \epsilon \mathbf{I})^{-1}$.
Similarly, we construct a centralized kernel
matrix $\widetilde{\mathbf{K}}_{Y|Z}$ corresponding to function $g'(Y)$.

Furthermore, to propose the statistic for CI, they provided general results on the asymptotic distributions of specific statistics defined in terms of kernel matrices under the assumption of uncorrelatedness between functions in particular spaces.
Let us consider the eigenvalue decompositions of the centralized kernel matrices of $\widetilde{\mathbf{K}}_{X}$ and $\widetilde{\mathbf{K}}_{Y}$, {\it i.e.}, $\widetilde{\mathbf{K}}_{X} = \mathbf{V}_{X} \mathbf{\Lambda}_{X} \mathbf{V}_{X}^T$ and $\widetilde{\mathbf{K}}_{Y} = \mathbf{V}_{Y} \mathbf{\Lambda}_{Y} \mathbf{V}_{Y}^T$, where $\mathbf{\Lambda}_{X}$ and $\mathbf{\Lambda}_{Y}$ are diagonal matrices containing the non-negative eigenvalues $\lambda_{\mathbf{x}, i}$ and $\lambda_{\mathbf{y}, j}$, respectively.
Furthermore, we define that $\boldsymbol{\psi}_{\mathbf{x}} = \left[ \psi_{\mathbf{x}, 1}(\mathbf{x}), ..., \psi_{\mathbf{x}, n}(\mathbf{x}) \right]\allowbreak\ := \mathbf{V}_{X} \mathbf{\Lambda}_{X}^{1/2}$ and $\boldsymbol{\phi}_{\mathbf{y}} = \left[ \phi_{\mathbf{y}, 1}(\mathbf{y}), ..., \phi_{\mathbf{y}, n}(\mathbf{y}) \right] := \mathbf{V}_{Y} \mathbf{\Lambda}_{Y}^{1/2}$, {\it i.e.}, $\psi_{\mathbf{x}, i}(x_k) = \lambda_{\mathbf{x}, i}^{1/2} V_{\mathbf{x}, ik}$ and $\phi_{\mathbf{y}, j}(y_k) = \lambda_{\mathbf{y}, j}^{1/2} V_{\mathbf{y}, jk}$. 
Then, defining tensor $\mathbf{T}$ and matrix $\mathbf{T}^{\ast}$ by 
\begin{eqnarray}
    T_{ijk} &:=& \frac{1}{\sqrt{n}}\psi_{\mathbf{x}, i}(x_k) \phi_{\mathbf{y}, j}(y_k) \\ 
    &=& \sqrt{\frac{\lambda_{\mathbf{x}, i} \lambda_{\mathbf{y}, j}}{n}} V_{\mathbf{x}, ik} V_{\mathbf{y}, jk}, \\
    T^{\ast}_{ij}(X,Y) &:=& \sqrt{\lambda^{\ast}_{X,i} \lambda^{\ast}_{Y,j}} u_{X,i}(X) u_{Y,j}(Y),
\end{eqnarray}
where $\lambda^{\ast}_{X,i}, \lambda^{\ast}_{Y,j}$ and $u_{X,i}(X) u_{Y,j}(Y)$ are the eigenvalues and eigenfunctions of kernel $k_{\mathcal{X}}$ with regard to the probability measure with the density $p(x)$, respectively, we define matrices $\mathbf{M}$ and $\mathbf{M}^{\ast}$ by
\begin{eqnarray}
    M_{ij, i'j'} &=& \sum_{k=1}^n T_{ijk} T_{i'j'k},  \\
    M^{\ast}_{ij, i'j'} &=& T^{\ast}_{ij}(X, Y) T^{\ast}_{i'j'}(X, Y).
    \label{eq:M}
\end{eqnarray}
Note that $\mathbf{M}$ and $\mathbf{M}^{\ast}$ for the conditional kernels are defined similarly.
The main technical results presented in Ref.~(\cite{10.5555/3020548.3020641, zhang2012kernelbasedconditionalindependencetest}) are as follows:

\begin{theorem}[Theorem 3 of Ref.~(\cite{10.5555/3020548.3020641, zhang2012kernelbasedconditionalindependencetest}]
    Suppose that we are given arbitrary centred kernels $k_{\mathcal{X}}$ and $k_{\mathcal{Y}}$ with discrete eigenvalues and the corresponding RKHS’s $\mathcal{H}_{\mathcal{X}}$ and $\mathcal{H}_{\mathcal{Y}}$ for sets of random variables $X$ and $Y$, respectively. We make the following three statements:
    \begin{itemize}
        \item[1)] Under the condition that $f(X)$ and $g(Y)$ are uncorrelated for all $f \in \mathcal{H}_{\mathcal{X}}$ and $g \in \mathcal{H}_{\mathcal{Y}}$, for any $L$ such that $\lambda^{\ast}_{X,L+1} \neq \lambda^{\ast}_{X,L}$ and $\lambda^{\ast}_{Y,L+1} \neq \lambda^{\ast}_{Y,L}$, we have
        \begin{eqnarray}
            \sum_{i,j = 1}^L M_{ij, ij} \overset{d}{\longrightarrow} \sum_{i,j=1}^L \mathring{\lambda}^{\ast}_{ij} z^2_{ij},\quad {\rm as}\ n \to \infty, 
            \label{eq:con1}
        \end{eqnarray}
        where $z_{ij}$ are {\it i.i.d.} standard Gaussian variables ({\it i.e.}, $z_{ij}^2$ are {\it i.i.d.} $\chi^2_1$-distributed variables), and $\mathring{\lambda}^{\ast}_{ij}$ are the eigenvalues of $\mathbb{E}[\mathbf{M}^{\ast}]$.
        \item[2)] In particular, if $X$ and $Y$ are further independent, we have
        \begin{eqnarray}
            \sum_{i,j = 1}^L M_{ij, ij} \overset{d}{\longrightarrow} \sum_{i,j=1}^L \lambda^{\ast}_{X, i} \lambda^{\ast}_{Y, j} z_{ij}^2,\quad {\rm as}\ n \to \infty, 
            \label{eq:con2}
        \end{eqnarray}
        where $z_{ij}^2$ are {\it i.i.d.} $\chi^2_1$-distributed variables.
        \item[3)] The results of Eqs.~\eqref{eq:con1} and ~\eqref{eq:con2} hold for $L = n \to \infty$.
    \end{itemize}
\end{theorem}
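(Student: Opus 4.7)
The plan is to reduce the empirical statistic to a quadratic form in i.i.d.\ sums of population-level random variables, and then to extract its asymptotic distribution via the multivariate central limit theorem combined with the spectral decomposition of a Gaussian vector. I would treat (1) as the main content; (2) falls out as an immediate corollary and (3) extends (1) by a truncation argument.

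First I would link the empirical and population eigensystems. Viewing $\widetilde{\mathbf{K}}_X/n$ as the empirical approximation to the centered integral operator associated with $k_{\mathcal{X}}$ and invoking standard perturbation theory for compact self-adjoint operators, the non-degeneracy assumption $\lambda^{\ast}_{X,L+1}\neq\lambda^{\ast}_{X,L}$ yields, uniformly in $i\le L$, $\lambda_{\mathbf{x},i}/n\to\lambda^{\ast}_{X,i}$ and $\sqrt{n}\,V_{\mathbf{x},ik}-u_{X,i}(x_k)=o_P(1)$ (up to an arbitrary sign), with analogous statements for $Y$. Substituting into the definition of $T_{ijk}$ then rewrites $\sqrt{n}\,T_{ijk}=T^{\ast}_{ij}(x_k,y_k)+r_{ijk}$ with a remainder $r_{ijk}$ small enough that, after summation and Cauchy-Schwarz, the discrepancy it induces in $\sum_{i,j\le L}M_{ij,ij}$ is $o_P(1)$.

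Second, I would apply the multivariate CLT to the $L^2$-dimensional random vector $W$ with components $W_{ij}:=n^{-1/2}\sum_{k=1}^n T^{\ast}_{ij}(x_k,y_k)$. Under the uncorrelatedness hypothesis one has $\mathbb{E}[u_{X,i}(X)\,u_{Y,j}(Y)]=0$ and hence $\mathbb{E}[T^{\ast}_{ij}]=0$, so the CLT gives $W\overset{d}{\longrightarrow}\mathcal{N}(0,\Sigma)$ with covariance $\Sigma_{(ij),(i'j')}=\mathbb{E}[T^{\ast}_{ij}\,T^{\ast}_{i'j'}]=\mathbb{E}[\mathbf{M}^{\ast}]_{(ij),(i'j')}$. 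Diagonalizing $\Sigma$ by an orthogonal change of basis expresses $\|W\|^2$ as $\sum_r\mathring{\lambda}^{\ast}_r z_r^2$, and combining this with the approximation from the previous paragraph via the continuous mapping theorem delivers (1). For (2), full independence together with the orthonormality $\mathbb{E}[u_{X,i}(X)u_{X,i'}(X)]=\delta_{ii'}$ and the analogous $Y$-identity make $\mathbb{E}[\mathbf{M}^{\ast}]$ diagonal with entries $\lambda^{\ast}_{X,i}\lambda^{\ast}_{Y,j}$, so its eigenvalues are exactly these products.

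The hard part will be controlling the eigenpair perturbation uniformly in $i,j\le L$ and extending the limit to $L=n\to\infty$ for (3). The uniform control requires the minimal spectral gap $\min_{i\le L}(\lambda^{\ast}_{X,i}-\lambda^{\ast}_{X,i+1})$ to dominate the $O(n^{-1/2})$ concentration of $\widetilde{\mathbf{K}}_X/n$, with the analogous condition on $Y$; for (3) one then truncates at an index $L=L_n$ growing slowly with $n$ and uses the trace-class summability $\sum_i\lambda^{\ast}_{X,i}<\infty$ and $\sum_j\lambda^{\ast}_{Y,j}<\infty$ to show that the tails $\sum_{ij>L}M_{ij,ij}$ and $\sum_{ij>L}\mathring{\lambda}^{\ast}_{ij}z_{ij}^2$ can be made uniformly negligible as $L,n\to\infty$ jointly. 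The principal technical obstacle is the trade-off between the spectral-gap-based approximation error and the rate at which $L_n$ is allowed to grow with $n$.
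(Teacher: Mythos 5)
The paper itself does not prove this statement: it is quoted verbatim (as Theorem~3 of the KCIT reference) in a review appendix, so there is no in-paper proof to compare against. Your sketch follows the same route as the original reference's argument --- a multivariate CLT for the vector $W$ of normalized partial sums of $T^{\ast}_{ij}$, diagonalization of the limiting Gaussian covariance $\mathbb{E}[\mathbf{M}^{\ast}]$ to obtain the weighted $\chi^2$ representation, perturbation bounds relating the empirical eigensystem of $\widetilde{\mathbf{K}}_X/n$ to that of the population integral operator, and a trace-class truncation for part~3. The strategy is sound, and your diagonalization of $\mathbb{E}[\mathbf{M}^{\ast}]$ under independence for part~2 is correct.

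There are two points you pass over that need attention. First, your reduction identifies $\sum_{i,j\le L}M_{ij,ij}$ with $\|W\|^2$, which requires $M_{ij,ij}=\bigl(\sum_k T_{ijk}\bigr)^2$. With the definition actually printed in the appendix, $M_{ij,i'j'}=\sum_k T_{ijk}T_{i'j'k}$, one instead has $M_{ij,ij}=\sum_k T_{ijk}^2=\frac{1}{n}\sum_k\psi_{\mathbf{x},i}(x_k)^2\phi_{\mathbf{y},j}(y_k)^2$, a sample average that converges in probability to the constant $\mathbb{E}[M^{\ast}_{ij,ij}]$ by the law of large numbers, not in distribution to a nondegenerate weighted $\chi^2$; under that literal reading the claim is not what you proved, and $\mathbf{M}$ instead plays the role of the empirical covariance whose eigenvalues are the $\chi^2$ weights (consistent with its use in Theorem~5 and Proposition~5, where $\mathrm{Tr}[\mathbf{M}]$ and $\mathrm{Tr}[\mathbf{M}^2]$ give the mean and variance). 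The quantity that actually has the claimed limit is $\sum_{i,j}\bigl(\sum_k T_{ijk}\bigr)^2=\frac{1}{n}\mathrm{Tr}\bigl[\widetilde{\mathbf{K}}_X\widetilde{\mathbf{K}}_Y\bigr]$. You adopted the reading under which the theorem is true --- which is the right mathematics --- but a complete proof must say explicitly which definition of $\mathbf{M}$ it is using and reconcile it with the surrounding notation. Second, the stated spectral condition $\lambda^{\ast}_{X,L+1}\neq\lambda^{\ast}_{X,L}$ only separates the top-$L$ eigenspace from its complement; eigenvalues inside the top block may be repeated, so your componentwise claim $\sqrt{n}\,V_{\mathbf{x},ik}-u_{X,i}(x_k)=o_P(1)$ need not hold for individual $i$. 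The fix is to argue at the level of spectral projections and use the invariance of $\sum_{i,j\le L}(\cdot)$ under orthogonal rotations within degenerate eigenblocks; as written, your uniform eigenpair perturbation step would fail whenever the top $L$ population eigenvalues are not simple.
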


Based on these considerations, the authors in Ref.~(\cite{10.5555/3020548.3020641, zhang2012kernelbasedconditionalindependencetest}) proposed statistics defined by the HSIP for unconditional and conditional independence tests.
\begin{theorem}[Theorem 4 of Ref.~(\cite{10.5555/3020548.3020641, zhang2012kernelbasedconditionalindependencetest})]
    Under the null hypothesis that $X$ and $Y$ are statistically independent, statistic
    \begin{eqnarray}
        T_{UI} := \frac{1}{n} {\rm Tr} \bigl[ \widetilde{\mathbf{K}}_X \widetilde{\mathbf{K}}_Y \bigr]
    \end{eqnarray}
    has the same asymptotic distribution as
    \begin{eqnarray}
        \breve{T}_{UI} := \frac{1}{n^2} \sum_{i,j=1}^n \lambda_{\mathbf{x},i} \lambda_{\mathbf{y},j} z_{ij}^2, \label{eq:asymp}
    \end{eqnarray}
    {\it i.e.}, $T_{UI} \overset{d}{=} \breve{T}_{UI}$ as $n \to \infty$, where $z_{ij}$ are {\it i.i.d.} standard Gaussian variables, $\lambda_{\mathbf{x},i}$ are the eigenvalues of $\widetilde{\mathbf{K}}_X$, and $\lambda_{\mathbf{y},i}$ are the eigenvalues of $\widetilde{\mathbf{K}}_Y$.
\end{theorem}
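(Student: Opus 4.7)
The plan is to reduce the theorem to the preceding Theorem 3 via a Mercer-type spectral representation of $T_{UI}$, and then to swap the true Mercer eigenvalues that appear in the limit for the empirical eigenvalues that define $\breve{T}_{UI}$. By Mercer's theorem, $k_X(x,x') = \sum_i \lambda^*_{X,i}\, u_{X,i}(x)\, u_{X,i}(x')$ with $\{u_{X,i}\}$ orthonormal in $L^2(P_X)$, and analogously for $k_Y$. Substituting this expansion into $\widetilde{\mathbf{K}}_X = \mathbf{H}\mathbf{K}_X\mathbf{H}$ and computing the trace gives
\begin{equation*}
T_{UI} = \frac{1}{n} \sum_{i,j} \lambda^*_{X,i}\, \lambda^*_{Y,j}\, S_{ij}^2,\quad S_{ij} := \sum_{a=1}^n \tilde{u}_{X,i}(x_a)\, \tilde{u}_{Y,j}(y_a),
\end{equation*}
where $\tilde{u}_{X,i}$ and $\tilde{u}_{Y,j}$ denote the empirically centered eigenfunctions. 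Under the null of independence, each summand of $S_{ij}$ has mean zero (by orthogonality of nonconstant eigenfunctions to the constant), the product $\tilde{u}_{X,i}(X)\tilde{u}_{Y,j}(Y)$ has unit variance from the $L^2$-orthonormality of the Mercer bases, and distinct pairs $(i,j) \neq (i',j')$ produce uncorrelated summands.

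The next step is to apply the multivariate CLT to the truncated family $(S_{ij}/\sqrt{n})_{i,j\le L}$ to obtain joint convergence to i.i.d.\ standard Gaussians $(z_{ij})_{i,j\le L}$. This is precisely the content of Theorem 3 part 2), which then yields
\begin{equation*}
T_{UI}^{(L)} := \frac{1}{n}\sum_{i,j\le L}\lambda^*_{X,i}\,\lambda^*_{Y,j}\, S_{ij}^2 \;\overset{d}{\longrightarrow}\; \sum_{i,j\le L}\lambda^*_{X,i}\,\lambda^*_{Y,j}\, z_{ij}^2.
\end{equation*}
Invoking part 3) of Theorem 3 to send $L = n \to \infty$ gives $T_{UI} \overset{d}{\longrightarrow} \sum_{i,j}\lambda^*_{X,i}\,\lambda^*_{Y,j}\, z_{ij}^2$. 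Finally, under the spectral-gap hypotheses of Theorem 3 the empirical eigenvalues satisfy $\lambda_{\mathbf{x},i}/n \overset{P}{\longrightarrow} \lambda^*_{X,i}$ for each fixed $i$, while the trace-class summability $\sum_i \lambda^*_{X,i} < \infty$ follows from $\mathcal{H}_{\mathcal{X}} \subset L^2_X$. A Slutsky-type argument then implies that $\breve{T}_{UI} = \frac{1}{n^2}\sum_{i,j\le n}\lambda_{\mathbf{x},i}\lambda_{\mathbf{y},j} z_{ij}^2$ converges in distribution to the same limit, which is the asymptotic equivalence $T_{UI} \overset{d}{=} \breve{T}_{UI}$ claimed by the theorem.

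The main obstacle will be the rigorous handling of the interchange between the $L \to \infty$ and $n \to \infty$ limits: tightness or uniform summability must be established so that both the spectral representation of $T_{UI}$ and its empirical-eigenvalue avatar $\breve{T}_{UI}$ converge uniformly in the truncation level, controlling the infinite-dimensional tails that any fixed $L$ cannot see. The spectral-gap hypotheses together with the trace-class summability of the eigenvalues are exactly what make the uniform-in-$i$ error $|\lambda_{\mathbf{x},i}/n - \lambda^*_{X,i}|$ summable against the Gaussian quadratic, validating the Slutsky step and completing the identification of the two limiting distributions.
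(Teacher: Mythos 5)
The paper does not actually prove this theorem: it is quoted verbatim from the KCIT reference in a review appendix, with the supporting apparatus (the Mercer/empirical eigendecompositions, the tensor $T_{ijk}$, the matrices $\mathbf{M}$, $\mathbf{M}^{\ast}$, and Theorem 3) stated but not derived. Your proposal reconstructs essentially that same standard argument --- Mercer expansion of $\mathrm{Tr}[\widetilde{\mathbf{K}}_X\widetilde{\mathbf{K}}_Y]$ into weighted squared empirical inner products, a multivariate CLT for the truncated family under the null, the $L=n\to\infty$ extension, and the substitution $\lambda_{\mathbf{x},i}/n\to\lambda^{\ast}_{X,i}$ to identify the limit with $\breve{T}_{UI}$ --- and you correctly flag the only delicate point, namely the uniform-in-truncation control needed to justify the limit interchange and the eigenvalue swap; this is consistent with the cited proof, so no gap beyond what you already acknowledge.
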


The statistic for the unconditional independence test closely relates to those based on the Hilbert-Schmidt independence criterion (HSIC)~(\cite{NIPS2007_d5cfead9}). 
The difference between these statistics lies in their distinct asymptotic distributions. 
Eq.~\eqref{eq:asymp} depends on the eigenvalues of $\widetilde{\mathbf{K}}_X$ and $\widetilde{\mathbf{K}}_Y$, whereas the HSIC$_b$ in Eq.~(4) in Ref.~(\cite{NIPS2007_d5cfead9}) depends
on the eigenvalues of an order-four tensor.
The following is the statistic for CI.

\begin{theorem}[Theorem 5 of Ref.~(\cite{10.5555/3020548.3020641, zhang2012kernelbasedconditionalindependencetest}]
    Under the null hypothesis that $X$ and $Y$ are conditionally independent, given $Z$, we obtain the statistic
    \begin{eqnarray}
        T_{CI} := \frac{1}{n} \mathrm{Tr} \bigl[ \widetilde{\mathbf{K}}_{\ddot{\mathbf{X}} | \mathbf{Z}} \widetilde{\mathbf{K}}_{\mathbf{Y} | \mathbf{Z}} \bigr]
    \end{eqnarray}
    has the same asymptotic distribution as
    \begin{eqnarray}
        \breve{T}_{CI} := \frac{1}{n} \sum_{k=1}^{n^2} \lambda_k z_k^2,
    \end{eqnarray}
    where $\lambda_k$ are the eigenvalues of matrix $\mathbf{M}$ in Eq.~\eqref{eq:M}, which is constructed by $\widetilde{\mathbf{K}}_{\ddot{\mathbf{X}} | \mathbf{Z}}$ and $\widetilde{\mathbf{K}}_{\mathbf{Y} | \mathbf{Z}}$,
    and $z_k$ are {\it i.i.d.} standard Gaussian variables.
\end{theorem}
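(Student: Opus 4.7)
The plan is to reduce the asymptotic analysis of $T_{CI}$ to that of the unconditional statistic treated in the preceding theorem, but applied to ``residualized'' feature maps obtained after projecting out the dependence on $Z$. First I would invoke Lemma~\ref{Fukumizu_2007} to restate the null hypothesis $X \perp \!\!\!\! \perp Y \mid Z$ as the vanishing of the conditional cross-covariance operator $\Sigma_{\ddot{X} Y \mid Z} = 0$, and then use Lemma~\ref{Daudin_1980} to re-express this as $\mathbb{E}[f'(\ddot X)\,g'(Y)] = 0$ for all residual functions $f'(\ddot X) = f(\ddot X) - \mathbb{E}[f\mid Z]$ and $g'(Y) = g(Y) - \mathbb{E}[g\mid Z]$ with $f \in \mathcal{H}_{\ddot{\mathcal{X}}}$, $g \in \mathcal{H}_{\mathcal{Y}}$. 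This identifies the correct population objects whose empirical counterparts are the centralized conditional kernels $\widetilde{\mathbf{K}}_{\ddot{X}\mid Z}$ and $\widetilde{\mathbf{K}}_{Y\mid Z}$.

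Second, I would exploit the fact that those matrices are exactly the residual kernels of kernel ridge regression. Using the closed form $\hat h^{\ast}_f(\mathbf{z}) = \widetilde{\mathbf K}_Z (\widetilde{\mathbf K}_Z + \epsilon \mathbf I)^{-1} f(\ddot{\mathbf x})$, the empirical residual of any feature-space function is obtained by left-multiplication with the operator $\mathbf R_Z = \epsilon(\widetilde{\mathbf K}_Z + \epsilon \mathbf I)^{-1}$; conjugating $\widetilde{\mathbf K}_{\ddot X}$ by $\mathbf R_Z$ yields a kernel matrix whose implicit features are the sample analogues of $f'(\ddot X)$, and similarly for $\widetilde{\mathbf K}_{Y \mid Z}$. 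Taking the spectral decomposition $\widetilde{\mathbf K}_{\ddot X \mid Z} = \mathbf V_{\ddot X \mid Z} \boldsymbol{\Lambda}_{\ddot X \mid Z} \mathbf V_{\ddot X \mid Z}^{T}$ and its $Y\mid Z$ counterpart, and defining residual feature columns $\psi_{\ddot x \mid z, i}$ and $\phi_{y \mid z, j}$ exactly as in the preamble to Theorem~3, I would rewrite
\[
T_{CI} = \frac{1}{n}\mathrm{Tr}\bigl[\widetilde{\mathbf K}_{\ddot X \mid Z} \widetilde{\mathbf K}_{Y \mid Z}\bigr] = \sum_{i,j} M_{ij,ij},
\]
placing $T_{CI}$ in the canonical form required by Theorem~3 with the tensor $\mathbf M$ of Eq.~\eqref{eq:M} built from the residual features.

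Third, I would apply part~(1) of Theorem~3 to this conditional setting: under the null, Lemma~\ref{Daudin_1980} guarantees that the residualized functions are uncorrelated, but in general not independent, so only the weaker conclusion of part~(1) is available rather than the factorized product of individual eigenvalues available in the unconditional case. Part~(3) of Theorem~3 then extends the truncation to $L = n^2 \to \infty$, giving a limit of the form $\sum_k \mathring{\lambda}^{\ast}_k z_k^2$ with weights the eigenvalues of $\mathbb{E}[\mathbf M^{\ast}]$. Finally, replacing those population eigenvalues with the empirical eigenvalues $\lambda_k$ of $\mathbf M$ via a Slutsky-type argument based on the consistency of the empirical spectrum yields the advertised distribution $\breve T_{CI} = \frac{1}{n}\sum_{k=1}^{n^2} \lambda_k z_k^2$.

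The hard part will be controlling the effect of the ridge parameter $\epsilon$ that enters $\mathbf R_Z$. The population residualization is the orthogonal projection onto the functions uncorrelated with $\mathcal H_{\mathcal Z}$, whereas $\mathbf R_Z = \epsilon(\widetilde{\mathbf K}_Z + \epsilon \mathbf I)^{-1}$ is only an approximation of that projector; one must let $\epsilon \to 0$ at a rate coupled to $n \to \infty$ so that $\widetilde{\mathbf K}_{\ddot X \mid Z}$ and $\widetilde{\mathbf K}_{Y \mid Z}$ consistently realize $f'$ and $g'$ while keeping the inverse well-conditioned. A perturbation bound on $\mathbf R_Z - \Pi_Z$ (with $\Pi_Z$ the population projector) and its propagation through the trace expression $\mathrm{Tr}[\widetilde{\mathbf K}_{\ddot X \mid Z} \widetilde{\mathbf K}_{Y \mid Z}]$ is the technical step that distinguishes this proof from the cleaner unconditional analogue and will absorb most of the effort.
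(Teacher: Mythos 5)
The paper does not actually prove this statement: it is quoted verbatim as Theorem~5 of the cited KCIT reference, and Appendix~\ref{app:kcit} only assembles the ingredients (Lemma~\ref{Fukumizu_2007}, Lemma~\ref{Daudin_1980}, the general asymptotic result labelled Theorem~3, and the ridge-regression construction of $\widetilde{\mathbf{K}}_{\ddot{X}|Z}$ and $\widetilde{\mathbf{K}}_{Y|Z}$). Your sketch follows exactly the route that this scaffolding implies and that the original reference takes: recast the null as uncorrelatedness of the residualized functions $f'$ and $g'$ via Lemma~\ref{Daudin_1980}, realize the residualization empirically through $\mathbf{R}_Z=\epsilon(\widetilde{\mathbf{K}}_Z+\epsilon\mathbf{I})^{-1}$, and then invoke parts~(1) and~(3) of Theorem~3 (only uncorrelatedness, not independence, holds for the residuals, so the weights are the eigenvalues of $\mathbb{E}[\mathbf{M}^{\ast}]$ rather than products $\lambda^{\ast}_{X,i}\lambda^{\ast}_{Y,j}$), finishing with a plug-in of the empirical spectrum of $\mathbf{M}$. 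You are also right that the genuinely hard, and in the original reference only informally treated, step is coupling $\epsilon\to 0$ to $n\to\infty$ so that $\mathbf{R}_Z$ consistently approximates the population residual projector.

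One step as written would fail, though. The identity $T_{CI}=\sum_{i,j}M_{ij,ij}$ is not correct: expanding the trace through the eigendecompositions gives
\begin{eqnarray}
\frac{1}{n}\mathrm{Tr}\bigl[\widetilde{\mathbf{K}}_{\ddot{X}|Z}\widetilde{\mathbf{K}}_{Y|Z}\bigr]=\sum_{i,j}\Bigl(\sum_{k}T_{ijk}\Bigr)^{2},
\end{eqnarray}
whereas $\sum_{i,j}M_{ij,ij}=\sum_{i,j,k}T_{ijk}^{2}=\mathrm{Tr}[\mathbf{M}]$. These differ by the cross terms, and the distinction matters: Theorem~\ref{thm:gamma_param} states $\mathbb{E}[\breve{T}_{CI}\mid\mathcal{D}]=\mathrm{Tr}[\mathbf{M}]$, so if the statistic literally equalled $\mathrm{Tr}[\mathbf{M}]$ it would coincide with its own conditional mean and could not have the nondegenerate weighted-$\chi^{2}$ limit being claimed. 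The correct reading is that the statistic is the squared norm of the vector $\hat{t}_{ij}=\sum_{k}T_{ijk}$, which is asymptotically Gaussian with covariance estimated by $\mathbf{M}$; the eigenvalues $\lambda_k$ of $\mathbf{M}$ then enter as the weights of the limiting $\sum_k\lambda_k z_k^2$. You inherited this slip from the paper's own loose rendering of Theorem~3 (which also writes the statistic as $\sum_{i,j}M_{ij,ij}$), but in a self-contained proof it must be repaired, since the application of the CLT is to $\hat{\mathbf{t}}$ and not to the diagonal of $\mathbf{M}$.
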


We can construct the unconditional and conditional independence tests by generating approximate
null distribution using the Monte Carlo simulation. 
In practice, we can approximate the null distribution with a gamma distribution whose two parameters are related to the mean
and variance. 
Under the null hypothesis, the distribution of $\breve{T}_{UI}$ can be approximated by the $\Gamma(k, \theta)$ distribution
\begin{eqnarray}
    p(t) = t^{k-1} \frac{e^{-t/\theta}}{\theta^k \Gamma(k)},\label{eq:gamma_dis}
\end{eqnarray}
where $k = \mathbb{E}^2 \bigl[\breve{T}_{UI} \bigr] / \mathbb{V}ar \bigl[\breve{T}_{UI} \bigr]$ and $\theta = \mathbb{V}ar \bigl[\breve{T}_{UI} \bigr] / \mathbb{E} \bigl[\breve{T}_{UI} \bigr]$. In the unconditional case, the two parameters can be defined similarly.
The mean and variance are estimated as follows:

\begin{theorem}[Proposition 5 of Ref.~(\cite{10.5555/3020548.3020641, zhang2012kernelbasedconditionalindependencetest}] \label{thm:gamma_param}
    \begin{itemize}
        \item[1)] Under the null hypothesis that $X$ and $Y$ are independent, on the given sample $\mathcal{D}$, we have that
        \begin{eqnarray}
            \mathbb{E} \bigr[ \breve{T}_{UI} | \mathcal{D} \bigl] &=& \frac{1}{n^2} {\rm Tr} \bigr[ \widetilde{\mathbf{K}}_X \bigl] {\rm Tr} \bigr[ \widetilde{\mathbf{K}}_Y \bigl], \label{eq:expectation} \\
            \mathbb{V}ar \bigr[ \breve{T}_{UI} | \mathcal{D} \bigl] &=& \frac{2}{n^4} {\rm Tr} \bigr[ \widetilde{\mathbf{K}}_X^2 \bigl] {\rm Tr} \bigr[ \widetilde{\mathbf{K}}_Y^2 \bigl]. \label{eq:variance}
        \end{eqnarray}
        \item[2)] Under the null hypothesis that $X$ and $Y$ are conditionally independent given $Z$, we have that
        \begin{eqnarray}
            \mathbb{E} \bigr[ \breve{T}_{CI} | \mathcal{D} \bigl] &=& {\rm Tr} \bigr[ \mathbf{M} \bigl], \\
            \mathbb{V}ar \bigr[ \breve{T}_{CI} | \mathcal{D} \bigl] &=& 2 {\rm Tr} \bigr[ \mathbf{M}^2 \bigl],
        \end{eqnarray}
        where $\mathbf{M}$ is the matrix of Eq.~\eqref{eq:M}, which is constructed by $\widetilde{\mathbf{K}}_{\ddot{\mathbf{X}} | \mathbf{Z}}$ and $\widetilde{\mathbf{K}}_{\mathbf{Y} | \mathbf{Z}}$.
    \end{itemize}
\end{theorem}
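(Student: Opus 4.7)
The plan is to fix the sample $\mathcal{D}$ so that the eigenvalues in the spectral representations $\breve{T}_{UI} = \frac{1}{n^2}\sum_{i,j=1}^n \lambda_{\mathbf{x},i}\lambda_{\mathbf{y},j} z_{ij}^2$ and $\breve{T}_{CI} = \frac{1}{n}\sum_{k=1}^{n^2} \lambda_k z_k^2$ (established in the two theorems immediately preceding the statement) are deterministic, leaving only the i.i.d.\ standard Gaussians $z_{ij}$ or $z_k$ as the source of randomness. Since each $z^2$ is $\chi^2_1$-distributed with mean $1$ and variance $2$, and the $z$'s are mutually independent, both moments reduce to elementary sums over eigenvalues.

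For part 1), linearity of expectation gives $\mathbb{E}[\breve{T}_{UI}\mid\mathcal{D}]=\frac{1}{n^2}\sum_{i,j}\lambda_{\mathbf{x},i}\lambda_{\mathbf{y},j}$, and the key observation is that this double sum \emph{factorizes} as $\bigl(\sum_i \lambda_{\mathbf{x},i}\bigr)\bigl(\sum_j \lambda_{\mathbf{y},j}\bigr)={\rm Tr}[\widetilde{\mathbf{K}}_X]\,{\rm Tr}[\widetilde{\mathbf{K}}_Y]$ by the identification of an eigenvalue sum with the trace of the corresponding symmetric matrix. For the variance, mutual independence of the $z_{ij}$ lets the variance distribute additively over the sum; combining $\mathbb{V}\mathrm{ar}[z_{ij}^2]=2$ with the analogous factorization $\sum_{i,j}\lambda_{\mathbf{x},i}^2\lambda_{\mathbf{y},j}^2 = {\rm Tr}[\widetilde{\mathbf{K}}_X^2]\,{\rm Tr}[\widetilde{\mathbf{K}}_Y^2]$ then yields the claimed closed form.

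For part 2), the same scheme carries over to the single-index weighted sum: the mean and variance reduce to $\sum_k \lambda_k$ and $2\sum_k \lambda_k^2$ respectively, and since $\mathbf{M}$ is symmetric positive semi-definite one converts these into ${\rm Tr}[\mathbf{M}]$ and ${\rm Tr}[\mathbf{M}^2]$. The only bookkeeping point is to reconcile the $1/n$ in $\breve{T}_{CI}$ with the $1/\sqrt{n}$ already built into the definition $T_{ijk}=\frac{1}{\sqrt{n}}\psi_{\mathbf{x},i}(x_k)\phi_{\mathbf{y},j}(y_k)$; via $M_{ij,i'j'}=\sum_k T_{ijk}T_{i'j'k}$, the appropriate powers of $n$ are already absorbed into $\mathbf{M}$, so the stated identities emerge without extra $n$-dependence.

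The hard part, if any, is therefore purely bookkeeping: carefully factorizing the double eigenvalue sums in the unconditional case and tracking the normalization conventions in the conditional case. Neither is substantive, and the overall argument is a short second-moment calculation for a diagonal Gaussian quadratic form once the spectral decompositions from the preceding theorems are available.
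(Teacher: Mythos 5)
First, note that the paper does not actually prove this statement: it is quoted (as Proposition 5) from the KCIT references, so there is no in-paper argument to compare yours against. Judged on its own merits, your strategy --- condition on $\mathcal{D}$ so that the eigenvalues are deterministic, then use $\mathbb{E}[z^2]=1$, $\mathbb{V}ar[z^2]=2$ and mutual independence of the $z$'s --- is the standard and essentially the only sensible route, and your part 1) is complete and correct: the double sums factorize as $\sum_{i,j}\lambda_{\mathbf{x},i}\lambda_{\mathbf{y},j} = \mathrm{Tr}\bigl[\widetilde{\mathbf{K}}_X\bigr]\,\mathrm{Tr}\bigl[\widetilde{\mathbf{K}}_Y\bigr]$ and $\sum_{i,j}\lambda_{\mathbf{x},i}^2\lambda_{\mathbf{y},j}^2 = \mathrm{Tr}\bigl[\widetilde{\mathbf{K}}_X^2\bigr]\,\mathrm{Tr}\bigl[\widetilde{\mathbf{K}}_Y^2\bigr]$, exactly as you say.

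Part 2), however, contains a genuine gap precisely where you declare the issue to be ``purely bookkeeping.'' Taking the paper's definitions at face value, $\breve{T}_{CI} = \frac{1}{n}\sum_{k=1}^{n^2}\lambda_k z_k^2$ with $\lambda_k$ the eigenvalues of $\mathbf{M}$, your own calculation yields $\mathbb{E}\bigl[\breve{T}_{CI}\mid\mathcal{D}\bigr] = \frac{1}{n}\mathrm{Tr}\bigl[\mathbf{M}\bigr]$ and $\mathbb{V}ar\bigl[\breve{T}_{CI}\mid\mathcal{D}\bigr] = \frac{2}{n^2}\mathrm{Tr}\bigl[\mathbf{M}^2\bigr]$, which differ from the stated identities by factors of $n$ and $n^2$. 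Your assertion that ``the appropriate powers of $n$ are already absorbed into $\mathbf{M}$'' does not resolve this: the $1/\sqrt{n}$ in $T_{ijk}$ is indeed inside $\mathbf{M}$, but the explicit $1/n$ prefactor in the definition of $\breve{T}_{CI}$ is a separate factor that your argument cannot make disappear. The discrepancy is a normalization inconsistency between the paper's transcription of Theorem 5 and of Proposition 5 of the KCIT reference (in the original, the prefactor conventions are chosen so that the two match); a correct write-up must either adopt a single consistent convention and show the factors cancel, or explicitly flag the mismatch, rather than asserting that the powers of $n$ take care of themselves.
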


\section{Details of quantum circuits}\label{app:qc}
Here, we describe the quantum circuit candidates used in this study.
As described in Sec. \ref{subsec:qkct}, the structure of quantum circuit $U(\mathbf{x})$, called as ``ansatz," is composed of three parts: the initialization $U_\mathrm{init}$, data embedding $U_{\mathrm{emb}}(\mathbf{x})$, and entangling $U_{\mathrm{enc}}$ parts, as shown in Fig.~\ref{fig:qc_whole}.
In addition, the amount of data reuploaded, referred to as the depth $n_{\mathrm{dep}}$, is a significant degree of freedom in quantum circuits.
We compared the performance of the causal discovery problems with various combinations of components.
This lineup is illustrated in (Fig.~\ref{fig:qc_parts}) as follows: $U_{\mathrm{init}} \in \{\mathrm{None}, H, S, T\}$, $U_{\mathrm{emb}}(\mathbf{x}) \in \{RY, RXRZ\}$, $U_{\mathrm{ent}} \in \{CX, CZ, \sqrt{\mathrm{iSWAP}}\}\{\mathrm{ladder, circ, all\_to\_all}\}$, and $n_{\mathrm{dep}} \in \{1,\allowbreak\ 4,\allowbreak\ 16\}$ for junction pattern experiments and $n_{\mathrm{dep}} \in \{5\}$ for real world data experiments.
These candidates were partially selected based on the expressibility reported by (\cite{sim2019expressibility}) and (\cite{haug2021capacity}); however, we did not observe a clear correlation between ansatz expressibility and causal discovery performance.

Finally, we describe the quantum circuit used to generate the dataset in Sec.~\ref{sec:results_solosumo} in Fig. ~\ref{fig:qc_datagen}.
Using this data generator, input vector $\mathbf{x} \in [0, \pi]^2$ is mapped to $[0, 1]^2$ via quantum operation.
We found that analyzing the dataset generated by this procedure is difficult for classical methods such as the Gaussian kernel, but can be handled effectively by quantum kernel methods.

\begin{center}\
\begin{figure*}[!h]
    \centering
    \includegraphics[keepaspectratio, scale=0.7]{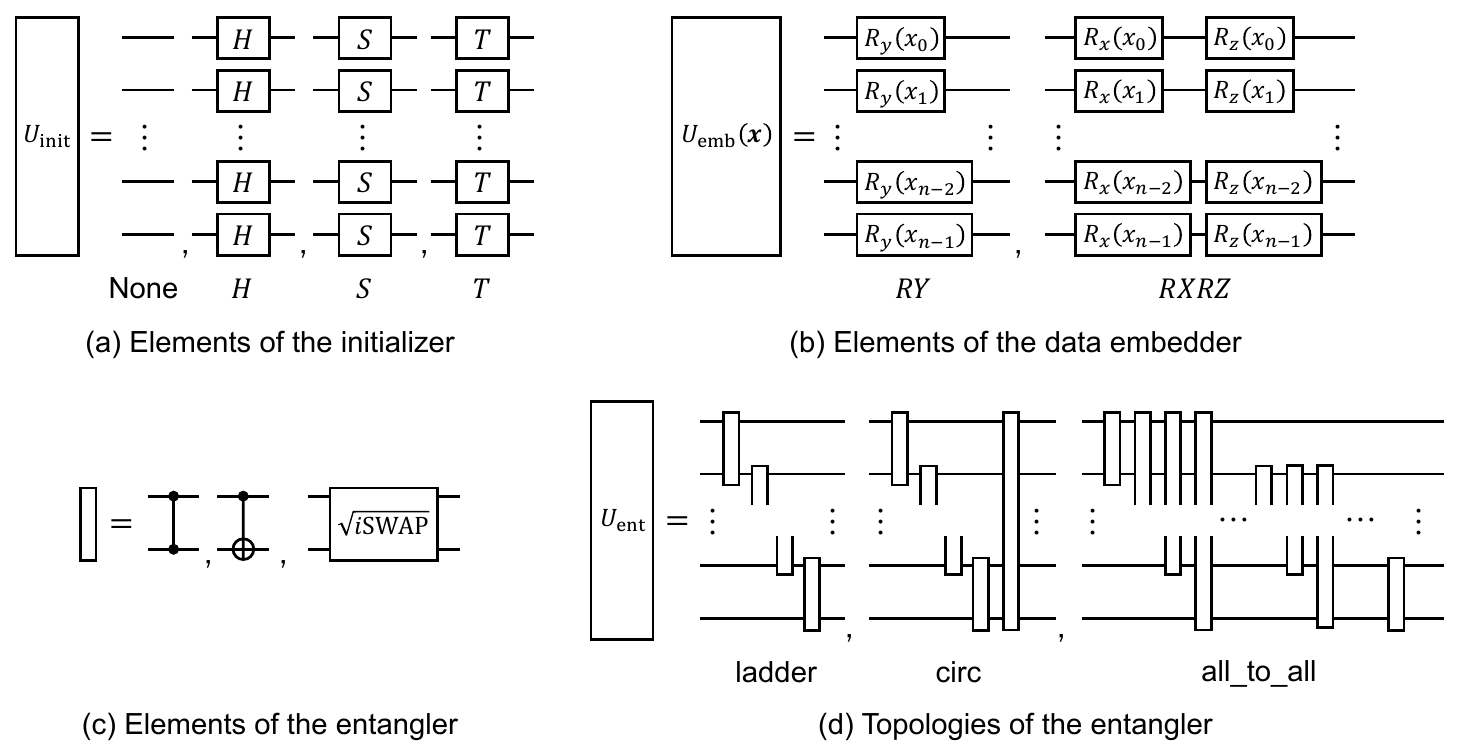}
    \caption{Elements of the quantum circuit}
    \label{fig:qc_parts}
\end{figure*}
\end{center}\

\vspace{8pt}

\begin{center}\
\begin{figure*}[!h]
    \centering
    \includegraphics[keepaspectratio, scale=0.8]{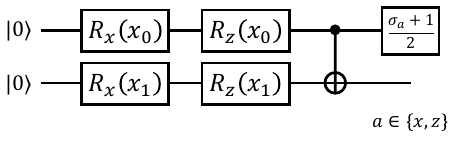}
    \caption{Quantum circuit of the data generator used in Sec.~\ref{sec:results_solosumo}}
    \label{fig:qc_datagen}
\end{figure*}
\end{center}\

\section{Proof of Lemma~\ref{lem:deriv}}\label{app:prf}

    For a given differentiable scalar-valued function $f(\mathbf{A})$ of matrix $\mathbf{A}$, it should be noted that
    \begin{eqnarray}
        \frac{d f}{d z} = \sum_{kl} \frac{\partial f}{\partial A_{kl}} \frac{\partial A_{kl}}{\partial z} = {\rm Tr} \left[ \left[ \frac{\partial f}{\partial \mathbf{A}} \right]^T \frac{\partial \mathbf{A}}{\partial z} \right].
    \end{eqnarray}
    Furthermore, if matrix $\mathbf{S}$ is symmetric, we derive 
    \begin{eqnarray}
        \frac{\partial \mathbf{S}}{\partial S_{ij}} = J^{ij} + J^{ji} - J^{ij}J^{ij},
    \end{eqnarray}
    where $J^{ij}$ denotes a single-entry matrix. Thus, for a given scalar function $f(\mathbf{S})$, we derive 
    \begin{eqnarray}
        \frac{d f}{d \mathbf{S}} = \left[ \frac{\partial f}{\partial \mathbf{S}} \right] + \left[ \frac{\partial f}{\partial \mathbf{S}} \right]^T - {\rm diag} \left[ \frac{\partial f}{\partial \mathbf{S}} \right]. \label{eq:d_f_sym}
    \end{eqnarray}
    In particular, for matrix $\mathbf{A}$ and symmetric matrix $\mathbf{S}$, Eq.~\eqref{eq:d_f_sym} results in 
    \begin{eqnarray}
        \frac{\partial {\rm Tr}[ \mathbf{AS} ]}{\partial \mathbf{S}} = \mathbf{A} + \mathbf{A}^T - (\mathbf{A} \circ \mathbf{I}).
    \end{eqnarray}

    Using the above equations, we can calculate the following:
    \begin{strip}
    \rule[-1ex]{\columnwidth}{1pt}\rule[-1ex]{1pt}{1.5ex}
    \begin{eqnarray}
        \frac{\partial}{\partial \theta} {\rm Tr} \left[ \mathbf{K}_X \mathbf{K}_Y \right] 
        &=& 
        {\rm Tr} \Biggl[ \left( \frac{\partial {\rm Tr}[\mathbf{K}_X \mathbf{K}_Y]}{\partial \mathbf{K}_X} \right)^T \frac{\partial \mathbf{K}_X}{\partial \theta} %
         + \left( \frac{{\partial \rm Tr}[\mathbf{K}_X \mathbf{K}_Y]}{\partial \mathbf{K}_Y} \right)^T \frac{\partial \mathbf{K}_Y}{\partial \theta} \Biggr] \\
        &=& 
        {\rm Tr} \left[ \left( \frac{\partial {\rm Tr}[\mathbf{K}_X \mathbf{K}_Y]}{\partial \mathbf{K}_X} \right)^T \frac{\partial \mathbf{K}_X}{\partial \theta} \right] \\
        &=& 
        {\rm Tr} \left[ \left( 2 \mathbf{K}_Y - \mathbf{K}_Y \circ \mathbf{I} \right) \partial_{\theta} \mathbf{K}_X \right], \label{eq:d_tr_AB} \\
        \frac{\partial}{\partial \theta} {\rm Tr} \left[ \mathbf{K}_X^2 \right] &=& {\rm Tr} \left[ \left( \frac{\partial {\rm Tr}[\mathbf{K}_X^2]}{\partial \mathbf{K}_X} \right)^T \frac{\partial \mathbf{K}_X}{\partial \theta} \right] \\
        &=& 
        {\rm Tr} \left[ \left( 4 \mathbf{K}_X - 2\mathbf{K}_X \circ \mathbf{I} \right) \partial_{\theta} \mathbf{K}_X \right]. \label{eq:d_tr_A2}
    \end{eqnarray}
    Therefore, we derive that
    \begin{eqnarray}
        \frac{\partial f}{\partial \theta} 
        &=& 
        - \frac{ \partial_{\theta} {\rm Tr}[ \mathbf{K}_X \mathbf{K}_Y ]}{{\rm Tr}[ \mathbf{K}_X \mathbf{K}_Y ]} %
         + \frac{ \partial_{\theta} {\rm Tr}[ \mathbf{K}_X^2 ]}{2{\rm Tr}[ \mathbf{K}_X^2 ]} + \frac{ \partial_{\theta} {\rm Tr}[ \mathbf{K}_Y^2 ]}{2{\rm Tr}[ \mathbf{K}_Y^2 ]} \\
        &=& 
        - \frac{{\rm Tr} \left[ \left( 2 \mathbf{K}_Y - \mathbf{K}_Y \circ \mathbf{I} \right) \partial_{\theta} \mathbf{K}_X \right]}{{\rm Tr}[ \mathbf{K}_X \mathbf{K}_Y ]} %
         + \frac{{\rm Tr} \left[ \left( 2 \mathbf{K}_X - \mathbf{K}_X \circ \mathbf{I} \right) \partial_{\theta} \mathbf{K}_X \right]}{{\rm Tr}[ \mathbf{K}_X^2 ]}.
    \end{eqnarray}
    \hfill\rule[1ex]{1pt}{1.5ex}\rule[2.3ex]{\columnwidth}{1pt}
    \end{strip}
    The case of $\partial_{\phi} f$ can be derived similarly.

\section{Application to biological data with gold standard network}\label{app:sachs}

To verify the applicability of the qPC algorithm, we systematically investigate the performance of the PC and qPC algorithms for the gene expression data, where the underlying causal relation is characterized by the gold standard network (\cite{sachs2005causal}).
We used the dataset from (\cite{sachsurl}).
The data describe the signal processing in proteins and phospholipids within human cells, comprising 11 variables.
We compared the inference results with the gold standard network using ROC curves to estimate how well the causal discovery algorithms could reconstruct the underlying causal relations from the data.
The ROC curves for the three algorithms with different sample sizes are shown in Fig. \ref{figure:sachs}.
All algorithms exhibit an improvement in reconstructing the gold standard network as the sample size increases.
We see no significant difference in the performance of the three methods.

\end{document}